\newcommand{\FragSet}{F}
\newcommand{\HashSet}{H}
\newcommand{\ReceiveSet}{R}
\newcommand{\ProposeSet}{P}
\newcommand{\SigSet}{S}
\newcommand{\sender}{sender}
\begin{document}

\sloppy

\title{Byzantine Reliable Broadcast with Low Communication and Time Complexity}

\author{Thomas Locher}
\affiliation{%
  \institution{DFINITY}
  \city{Zurich}
  \country{Switzerland}
}
\email{thomas.locher@dfinity.org}

\renewcommand{\shortauthors}{Locher}

\begin{abstract}
Byzantine reliable broadcast is a fundamental problem in distributed computing, which has been studied extensively over the past decades. State-of-the-art algorithms are predominantly based on the approach to share encoded fragments of the broadcast message, yielding an asymptotically optimal communication complexity
when the message size exceeds the network size, a condition frequently encountered in practice.
However, algorithms following the standard coding approach incur an overhead factor of at least 3, which can already be a burden for bandwidth-constrained applications. Minimizing this overhead is an important objective with immediate benefits to protocols that use a reliable broadcast routine as a building block.

This paper introduces a novel mechanism to lower the communication and computational complexity. Two algorithms are presented that employ this mechanism to reliably broadcast messages in an asynchronous network where less than a third of all nodes are Byzantine. The first algorithm reduces the overhead factor to $2$ and has a time complexity of $3$ if the sender is honest, whereas the second algorithm attains an optimal time complexity of $2$ with the same overhead factor in the absence of equivocation.
Moreover, an optimization is proposed that reduces the overhead factor to $3/2$ under normal operation in practice. Lastly, a lower bound is proved that an overhead factor lower than $3/2$ cannot be achieved for a relevant class of reliable broadcast algorithms.
\end{abstract}

\begin{CCSXML}
<ccs2012>
   <concept>
       <concept_id>10010520.10010575.10011743</concept_id>
       <concept_desc>Computer systems organization~Fault-tolerant network topologies</concept_desc>
       <concept_significance>500</concept_significance>
       </concept>
   <concept>
       <concept_id>10003752.10003809.10010172</concept_id>
       <concept_desc>Theory of computation~Distributed algorithms</concept_desc>
       <concept_significance>500</concept_significance>
       </concept>
 </ccs2012>
\end{CCSXML}

\ccsdesc[500]{Computer systems organization~Fault-tolerant network topologies}
\ccsdesc[500]{Theory of computation~Distributed algorithms}

\keywords{asynchronous networks, reliable broadcast, communication complexity}

\maketitle

\section{Introduction}
\label{sec:introduction}

The goal of Byzantine reliable broadcast is to disseminate a message efficiently and reliably despite the presence of Byzantine nodes that may interfere with the protocol execution in arbitrary ways.
A reliable broadcast routine is a powerful primitive with a broad range of applications including asynchronous atomic broadcast~\cite{duan2018, keidar2021, guo2020, miller2016, zhang2022}, distributed key generation~\cite{abraham2021, das2022, kokoris2020}, secure data replication~\cite{cachin2002}, and secret sharing~\cite{yurek2021}.
Moreover, Byzantine reliable broadcast plays a pivotal role in Byzantine fault-tolerant consensus protocols where message dissemination is treated separately from message ordering to improve throughput~\cite{danezis2022}.

The first Byzantine reliable broadcast algorithm is due to Bracha~\cite{bracha1987}. While it is elegant in its simplicity, its main drawback is that the message $m$ is broadcast by every node during the execution of the algorithm, i.e., $O(|m|n^2)$ bits are sent overall, where $|m|$ denotes the size of $m$ in bits and $n$ is the number of nodes in the network.
In the seminal paper by Cachin and Tessaro~\cite{cachin2005}, this bound was improved to
$O(|m|n + \kappa n^2\log(n))$ bits using \emph{erasure coding}, where $\kappa$ is the output size of a collision-resistant hash function. Since each node must receive the message in a successful broadcast, this result is asymptotically optimal if $|m| \in \Omega(\kappa n\log(n))$. Subsequent work has primarily focused on getting closer to the lower bound of $\Omega(|m|n + n^2)$~\cite{abraham2022, alhaddad2022, das2021, nayak2020, patra2011}.
These pieces of work follow the blueprint laid out by Cachin and Tessaro and augment it with error-correction and cryptographic primitives to achieve better asymptotic bounds.

Curiously, there is little work on minimizing the constant in the $O(|m|n)$ term, although it likely dominates the actual bandwidth consumption in practice and is therefore crucial for real-world applications. The design by Cachin and Tessaro is based on a $(n, t+1)$-erasure code, where $t < n/3$ is the largest number of nodes that may exhibit Byzantine behavior.
In this design, nodes broadcast encoded \emph{fragments} of size $|m|/(t+1) \approx 3|m|/n$  instead of broadcasting $m$, sending at least $3|m|n$ bits altogether. An overhead factor of $3$ -- compared to the ideal scenario where each node receives exactly $|m|$ bits -- may already prove too costly for applications with strict bandwidth constraints.
All schemes that follow this approach naturally inherit the overhead inherent in this design.

In this paper, a mechanism is introduced that departs from this blueprint in order to reduce the communication complexity, i.e., the number of bits that need to be exchanged in the worst case, for large messages. The core idea is quite simple: a $(n, 2t+1)$-erasure code is used instead, reducing the fragment size to $|m|/(2t+1) \approx \frac{3}{2} |m|/n$.
This modification obviously reduces the cost of broadcasting fragments; however, correctness is no longer given without additional changes because some nodes may receive $2t+1$ fragments whereas others merely obtain $t+1$ fragment, which is not enough to recover the message.
This problem is addressed by introducing the following step. If a node with $2t+1$ fragments manages to reconstruct the message, it disseminates fragments again but only to the $t$ nodes from which no fragment was received. As we will see, this step is sufficient to ensure correctness while keeping the communication complexity low.

\begin{figure}[t]
\center
\includegraphics[width=\columnwidth]{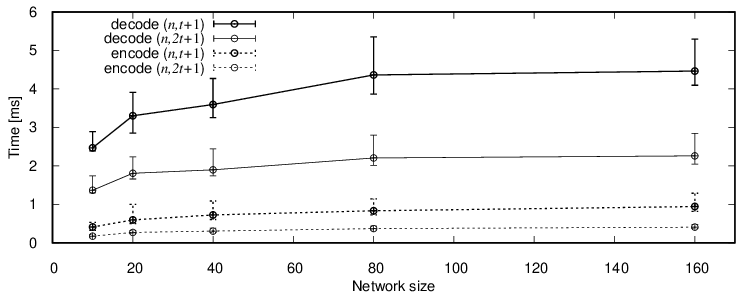}
\caption{Given an input of size 1 MiB, encoding and decoding times for erasure codes with parameters $(n, t+1)$ and $(n, 2t+1)$ are shown for different network sizes. The
circles and bars mark the averages and the 5th and 95th percentiles, respectively, of the measured times over 1000 runs.}
\label{fig:ec_performance}
\end{figure}

As an added benefit, the transition from a $(n, t+1)$-erasure code to a $(n, 2t+1)$-erasure code results in substantial performance improvements, both with respect to the encoding and decoding of fragments. Figure~\ref{fig:ec_performance} shows performance numbers when encoding/decoding an input of size 1 MiB for various network sizes and the two different parameterizations. The experiments were conducted on an Intel Core i7 CPU with 6 cores and 32 GB of memory using an optimized library\footnote{See \url{https://github.com/AndersTrier/reed-solomon-simd}.}
for Reed-Solomon codes~\cite{reed1960}.
The figure confirms the practicality of erasure coding due to its efficiency, exhibiting encoding and decoding times in the millisecond range, and scalability with respect to the network size.
Moreover, the figure reveals that the encoding (decoding) time is lower by a factor of 2.2--2.4 (1.8--2.0) for all network sizes when using an $(n, 2t+1)$-erasure code. Another interesting property is that the variance is also significantly lower, leading to more predictable encoding and decoding times.

The paper is organized as follows. The model is formally introduced in \S\ref{sec:model}.
A concrete algorithm that makes use of the novel mechanism is presented in \S\ref{sec:algorithm}. Its communication complexity is $2|m|n$ plus a term independent of $|m|$. Given an \emph{honest} (i.e., non-faulty) sender, each honest node obtains the message after $3$ rounds of communication.
A natural question is whether there is an algorithm with an overhead factor lower than $3$ and that terminates in an optimal $2$ rounds given an honest sender.
In \S\ref{sec:variant}, this question is answered in the affirmative by presenting an algorithm with this property and the same overhead factor if the sender does not equivocate. Otherwise, the worst-case overhead factor is $\frac{5}{2}$ with respect to the largest fragment size transmitted by honest nodes.
While the first algorithm only requires collision-resistant hash functions, the second algorithm makes use of threshold signatures.
Both algorithms further bound the number of bits that are stored at each node in the worst case to a small multiple of a predetermined maximum message size.
Practical considerations with the goal of reducing the communication complexity further in real-world deployments and simplifying the implementation are discussed in \S\ref{sec:optimizations}. A key observation is that the overhead can be reduced to $\frac{3}{2}$ during periods of synchrony and in the absence of failures.
As shown in \S\ref{sec:lower_bound}, a lower overhead factor than $\frac{3}{2}$ cannot be guaranteed for a particular class of algorithms.
Related work is summarized in \S\ref{sec:related_work} before the paper concludes in \S\ref{sec:conclusion}.

\section{Model}
\label{sec:model}

The considered network comprises $n = 3t+1$ nodes, where $t$ denotes the maximum number of \emph{Byzantine nodes}, which may deviate arbitrarily from any given protocol.
The other $2t+1$ nodes faithfully execute the given protocol and are called \emph{honest}.
Any two nodes can communicate directly over an authenticated channel by exchanging messages.
Communication is assumed to be \emph{asynchronous} in the sense that the delivery of messages, while guaranteed, may be delayed indefinitely. In this communication model, nodes cannot make any assumption about message delays, neither about the time required until sent message arrive at their destinations nor about the time that elapses until certain (expected) messages are received.

A node may send a message to a single node or to multiple nodes.
When a node $v$ sends a message to all nodes, we say that $v$ \emph{broadcasts} this message.
A faulty node may fail to send messages as specified in the protocol.
The goal of \emph{reliable broadcast} is to ensure, under some conditions, that all honest nodes eventually \emph{deliver} a certain message, i.e., mark it as the accepted outcome of the reliable broadcast. The required properties of a reliable broadcast protocol are stated formally in the following definition.

\begin{definition}[Reliable broadcast]
\label{def:reliable_broadcast}
A \emph{reliable broadcast protocol} is a distributed protocol to send a message $m$ from a specific node called the \emph{sender} to all nodes with the following properties.
\begin{itemize}
\item \textbf{Validity}: If the sender is honest and broadcasts $m$, then every honest node eventually delivers $m$.
\item \textbf{Agreement}: If two honest node deliver messages $m$ and $m'$, then $m = m'$.
\item \textbf{Integrity}: Every honest node delivers at most one message $m$.
\item \textbf{Totality}: If an honest node delivers $m$, then all honest nodes eventually deliver $m$.
\end{itemize}
\end{definition}

In general, nodes may engage in multiple reliable broadcasts in parallel. In this case, the integrity condition effectively requires that messages are attributable to a uniquely identifiable execution context. In the following, we assume that every message implicitly contains an identifier and that each node runs a separate instance of the protocol for each identifier.
The sender for a particular execution context is assumed to be globally known.

As mentioned above, the objective is to reliably broadcast some message $m$, where the notation $|m|$ is used to denote the size of the message in bits.
Instead of sending $m$, the algorithms introduced in \S\ref{sec:algorithm} and \S\ref{sec:variant} send \emph{fragments} of $m$, which can be either chunks of $m$ itself or encoded data derived from $m$.
Without loss of generality, we assume that there is some upper bound $\ell_{max}$ on the message size, which in turn limits the fragment size.
State-of-the-art algorithms use either \emph{erasure codes}, which can handle missing data, or \emph{error-correcting codes}, which support the correction of erroneous data.
The algorithms in this paper exclusively use an $(n, k)$-erasure codes with optimal reception efficiency, i.e., exactly $k$ out of $n$ symbols are sufficient to reconstruct a message of $k$ symbols.
Throughout this paper, the parameters $n$ and $k$ correspond to the total number of nodes and the number of honest nodes, i.e., $k \coloneqq 2t+1$. We assume that each node has access to the routine \texttt{get\_fragments} that takes a message $m$ as input and returns a list of $n$ fragments of size $\frac{|m|}{k} = \frac{|m|}{2t+1} < \frac{3}{2}\frac{|m|}{n}$ each. Furthermore, the nodes use the routine \texttt{recover\_message} to recover $m$ given any subset $F$ of the $n$ fragments of size $|F| \ge k = 2t+1$. This routine is assumed to always return \emph{some} message, which may simply be a bit string of zeroes in case of an error, e.g., when the input contains fragments of different sizes.

In addition to fragments, the nodes also send hashes, identifiers, and, in the case of the algorithm presented in \S\ref{sec:variant}, also signatures and signature shares.
The hashes are assumed to be \emph{cryptographically strong} in the sense that it is infeasible to find hash collisions (except with negligible probability). Naturally, it is also assumed that it is computationally infeasible to spoof signatures or signature shares. Since the cryptographic properties only hold for hashes, signatures, and shares of a certain minimum size, we introduce a security parameter $\kappa$ and define that the size of all data types other than fragments is bounded by $O(\kappa)$.

Reliable broadcast algorithms are evaluated against multiple complexity measures.
For a specific number $n$ of nodes and message size $\ell$, the \emph{communication complexity} $\mathcal{C}(n,\ell)$ of an algorithm is the total number of bits sent by all honest nodes in the worst case assuming fewer than $n/3$ Byzantine nodes.
If the sender is Byzantine, we define that $\ell$ is the message size corresponding to the largest fragment sent by an honest node.
Given that at least $n\ell$ bits need to be transferred and we are concerned with the overhead for large messages, the primary goal is to minimize
$\mathcal{L}(n) \coloneqq \lim_{\ell\rightarrow \infty} \frac{\mathcal{C}(n,\ell)}{\ell n}$.
Additionally, the \emph{time complexity} of an algorithm is of great practical importance, measuring the duration of an execution when normalizing the maximum message delay to $1$ time unit. According to Definition~\ref{def:reliable_broadcast}, an execution with a Byzantine sender may never terminate. Therefore, we restrict our attention to the case where the sender is honest.
Due to the validity condition, this \emph{good-case} time complexity must be bounded.
Note that both presented algorithms ensure that the number of \emph{communication rounds} is also bounded given a Byzantine sender.
Lastly, the \emph{space complexity} is considered as well, which is defined as the number of bits that any honest node stores during the execution of the algorithm in the worst case.

\section{Algorithm}
\label{sec:algorithm}

\subsection{Overview}
The algorithm, referred to as $\mathcal{A}_{bit}$, effectively works as outlined in \S\ref{sec:introduction} with a few important additions.
In the first step, the sender disseminates the fragments. The validity of fragments is verified using Merkle proofs (as in \cite{cachin2005}). After validating the received fragment, each honest node broadcasts a \emph{proposal} to accept the message with the root hash in the received Merkle proof. The proposals serve to keep the communication complexity low and ensure that the integrity property holds.
Honest nodes broadcast their fragments only if they receive at least $2t+1$ proposals for the corresponding root hash. Lastly, if an honest node receives at least $2t+1$ fragments and manages to reconstruct the correct message, it first carries out the crucial step of sending the $j^{th}$ fragment to node $v_j$ if it did not receive any fragment from $v_j$ to guarantee that the totality property holds in case of a Byzantine sender before delivering the message.

\subsection{Description}

Algorithm $\mathcal{A}_{bit}$ makes use of the routines \texttt{get\_fragments} and \texttt{recover\_message} introduced in \S\ref{sec:model} to generate fragments of a given message and recover the message given at least $2t+1$ valid fragments, respectively. Furthermore, it requires routines for Merkle tree operations, specifically, \texttt{get\_merkle\_root} yields the Merkle root hash for a given set of fragments, \texttt{get\_merkle\_proof} returns the Merkle proof for a specific fragment, and \texttt{valid\_merkle\_proof} indicates whether a given Merkle proof is valid.

Each node executing $\mathcal{A}_{bit}$ locally maintains the data structures $\FragSet$, $\ReceiveSet$, $\HashSet$, and $\ProposeSet$.
Let $\mathcal{F}$, $\Pi$, and $\mathcal{H}$ denote the set of all possible fragments, Merkle proofs, and hashes, respectively.
The hash map $\FragSet: \mathcal{H} \times V \rightarrow \mathcal{F} \times \Pi$ stores the fragment $f_j \in \mathcal{F}$ and Merkle proof $\pi_j \in \Pi$ of node $v_j$ for the message with the root hash $h \in \mathcal{H}$. If $(f_j, \pi_i)$ is locally available, then $\FragSet(h, v_j) = (f_j, \pi_j)$, and $\FragSet(h, v_j) = \bot$ otherwise. This hash map is used to collect fragments with the goal of eventually recovering the corresponding message. We further define that $\FragSet(h) \coloneqq \cup_{v_j \in V} \FragSet(h, v_j)$ denotes the set of all collected $(v_j, \pi_j)$ pairs for hash $h$.
The hash map $\ReceiveSet: \mathcal{H} \rightarrow 2^V$ stores the nodes from which a fragment for root hash $h$ has been obtained. This hash map is needed to determine which nodes may still be missing their fragments.
Conversely, for any node $v$, the hash map $\HashSet: V \rightarrow 2^\mathcal{H}$ contains the set of hashes for which a proposal or fragment was received from node $v$.
The purpose of this data structure is to bound both the communication and space complexity.
Lastly, $\ProposeSet: \mathcal{H} \rightarrow 2^V$ indicates which nodes have proposed the delivery of the message associated with the Merkle root hash $h$.
The collected proposals are used to ensure that only fragments with sufficient support are broadcast.
All hash maps are initially empty. In addition to the hash maps, each node further uses the Boolean variable $done$, initially $false$, to capture the information whether the execution has terminated, either with or without the delivery of a message. In the latter case, the totality condition implies that no honest node will deliver a message for this execution.
Moreover, let $\mathbf{H} \coloneqq \cup_{v \in V} \HashSet(v)$ and $h_{max} \coloneqq \arg\max_{h \in \mathbf{H}} |\ProposeSet(h)|$, breaking ties arbitrarily if there is no single hash for which the most proposals have been received.

We distinguish between ``triggered actions'', which are executed when a function is invoked or a message is received, and ``state-based actions'', which occur when some conditions hold for the local state. This separation facilitates the description and analysis of the algorithm's properties.
The triggered actions are formally stated in Algorithm~\ref{algo:algorithm_bit_triggered_actions} and discussed next.

\begin{algorithm}[t]
\begin{algorithmic}
\caption{$\mathcal{A}_{bit}$:
Triggered actions at node $v_i$. Initially, $\FragSet = \ReceiveSet = \HashSet = \ProposeSet = \{\}$.}
\label{algo:algorithm_bit_triggered_actions}
\IF {\texttt{reliable\_broadcast($m$)} invoked \AND $v_i = \sender$}
\STATE ($f_1,\ldots, f_n$) $\coloneqq$ \texttt{get\_fragments}($m$)
\STATE $h$ $\coloneqq$ \texttt{get\_merkle\_root}($f_1,\ldots, f_n$)
\FOR {$v_j \in V$}
\STATE $\pi_j \coloneqq$ \texttt{get\_merkle\_proof}($(f_1, \ldots, f_n), j$)
\STATE \textbf{send} $fragment(h, j, f_j, \pi_j)$ to $v_j$
\ENDFOR
\ENDIF
\STATE
\IF {\textbf{received} $fragment(h, j, f_j, \pi_j)$ from $v_k$ \AND ($j=i$ \OR $j=k$)}
\IF {($|\HashSet(v_k)| < 2$ \OR $h \in \HashSet(v_k)$) \AND \texttt{valid\_merkle\_proof}($h, f_j, j, \pi_j$)}
\STATE $\HashSet(v_k) \coloneqq \HashSet(v_k) \cup \{h\}$, $\ReceiveSet(h) \coloneqq
\ReceiveSet(h) \cup \{v_k\}$, $\FragSet(h, v_j) \coloneqq (f_j, \pi_j)$
\IF {$i = j$ \AND first fragment from $v_k = \sender$}
\STATE \textbf{broadcast} $proposal(h)$
\ENDIF
\ENDIF
\ENDIF
\STATE
\IF {\textbf{received} $proposal(h)$ from $v_k$}
\IF {$|\HashSet(v_k)| < 2$ \OR $h \in \HashSet(v_k)$}
\STATE $\HashSet(v_k) \coloneqq \HashSet(v_k) \cup \{h\}$, $\ProposeSet(h) \coloneqq \ProposeSet(h) \cup \{v_k\}$
\ENDIF
\ENDIF
\end{algorithmic}
\end{algorithm}

Given a message $m$, a reliable broadcast is triggered by invoking the routine \texttt{reliable\_broadcast} at node $v_i = \sender$, which performs the standard steps of generating $n$ fragments $f_1,\ldots, f_n$, computing the Merkle root hash $h$, and then transmitting the \emph{fragment message} $fragment(h, j, f_j, \pi_j)$, which contains the corresponding Merkle proof $\pi_j$, to $v_j$ for all $j \in \{1,\ldots,n\}$.
A node $v_i$ only accepts a received message $fragment(h, j, f_j, \pi_j)$ from some node $v_k$ under the following conditions.
First, the sender $v_k$ must have sent the fragment of the recipient $v_i$ (i.e., $j = i$) or its own fragment (i.e., $j = k$). Second, the sender has not sent messages for two Merkle root hashes other than $h$ before (formally, $|\HashSet(v_k)| < 2$ or $h \in \HashSet(v_k)$) and, lastly, the Merkle proof $\pi_j$ in the received message is valid.
If all conditions are met, the hash maps $\HashSet$ and $\ReceiveSet$ are updated by adding $h$ and $v_k$, respectively, and the fragment and Merkle proof are stored ($\FragSet(h, v_j) \coloneqq (f_j, \pi_j)$).
In the final step, if the recipient received its fragment (i.e., $j = i$) and it is the first fragment received from the dedicated sender (i.e., $v_k = \sender$), then $v_i$ broadcasts $proposal(h)$.
Whenever a proposal of the form $proposal(h)$ is received from node $v_k$, it is again only accepted if $v_k$ did not send messages associated with two other hashes before, in which case $h$ is added to $\HashSet(v_k)$ and $v_k$ is added to $\ProposeSet(v_k)$.

Algorithm~\ref{algo:algorithm_bit_state_actions} describes the state-based actions of algorithm $\mathcal{A}_{bit}$. An honest node $v_i$ broadcasts its fragment $f_i$, at most once, after collecting $2t+1$ proposals for the corresponding Merkle root hash $h$.
By contrast, a node $v_i$ broadcasts a proposal for a specific root hash $h$ not only when it receives its fragment from the $\sender$ but also when at least $t+1$ fragments for the root hash $h = h_{max}$ have been received and $v_i$ has not broadcast the proposal before.
Lastly, if a node $v_i$ receives at least $2t+1$ fragments and proposals for $h_{max}$, it
recovers the message $m$ and, additionally, recomputes the fragments and the corresponding root hash. If the computed hash matches $h_{max}$, node $v_i$ concludes that the fragments are valid and the message $m$ can be delivered. In this case, $v_i$ first recomputes the Merkle proof $\pi_j$ and sends $fragment(h_{max},j,f_j,\pi_j)$ to each $v_j \in V \setminus \ReceiveSet(h_{max})$ before calling \texttt{deliver($m$)} to deliver the message.
Note that $done$ is set to $true$ after executing these steps even if $h \neq h_{max}$ because it is computationally infeasible to find any $2t+1$ fragments such that equality holds.

\begin{algorithm}[t]
\begin{algorithmic}
\caption{$\mathcal{A}_{bit}$: State-based actions at node $v_i$. Initially, $\FragSet = \ReceiveSet = \ProposeSet = \{\}$, $done = false$. Let $h_{max} \coloneqq \arg\max_{h \in \mathbf{H}} |\ProposeSet(h)|$.}
\label{algo:algorithm_bit_state_actions}
\IF {$|\ProposeSet(h_{max})| \ge 2t+1$ \AND \NOT broadcast $(f_i, \pi_i) \coloneqq \FragSet(h_{max},v_i) \ne \bot$ before}
\STATE \textbf{broadcast} $fragment(h_{max}, f_i, i, \pi_i)$
\ENDIF
\STATE
\IF {($|\FragSet(h_{max})| \ge t+1$ \AND \NOT broadcast $proposal(h_{max})$ before}
\STATE \textbf{broadcast} $proposal(h_{max})$
\ENDIF
\STATE
\IF {$|\ProposeSet(h_{max})| \ge 2t+1$ \AND $|\FragSet(h_{max})| \ge 2t+1$ \AND \NOT $done$}
\STATE $m \coloneqq$ \texttt{recover\_message($\FragSet(h_{max})$)}
\STATE ($f_1,\ldots, f_n$) $\coloneqq$ \texttt{get\_fragments}($m$)
\STATE $h \coloneqq$ \texttt{get\_merkle\_root($f_1,\ldots, f_n$)}
\IF {$h = h_{max}$}
\FOR {$v_j \in V \setminus \ReceiveSet(h_{max})$\rlap{\hspace{100pt}\smash{$\left.\begin{array}{@{}c}\\{}\\{}\\{}\\{}\\{}\\{}\\{}\\{}\end{array}\right\}\begin{tabular}{l}Execute delivery\end{tabular}$}}}
\STATE $\pi_j \coloneqq$ \texttt{get\_merkle\_proof}(($f_1,\ldots, f_n$), $j$)
\STATE \textbf{send} $fragment(h_{max}, j, f_j, \pi_j)$ to $v_j$
\ENDFOR
\STATE \texttt{deliver($m$)}
\ENDIF
\STATE $done \coloneqq true$
\ENDIF
\end{algorithmic}
\end{algorithm}

\subsection{Analysis}

In this section, we prove the correctness of $\mathcal{A}_{bit}$ as well as its communication, time, and space complexity.
A series of lemmas is used to simplify the proof structure.
The first lemma is concerned with the fragments that honest nodes broadcast, showing that the proposal mechanism ensures that honest nodes only broadcast fragments for one specific root hash.

\begin{lemma}
\label{lemma:one_fragment_hash}
If honest nodes broadcast fragments for root hashes $h$ and $h'$, then $h = h'$.
\end{lemma}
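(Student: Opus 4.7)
I would proceed by contradiction. Assume that some honest node broadcasts a fragment for hash $h$ and some honest node broadcasts a fragment for hash $h'$ with $h \neq h'$. The first state-based rule in Algorithm~\ref{algo:algorithm_bit_state_actions} guarantees that each such broadcast is preceded locally by the condition $|\ProposeSet(\cdot)| \ge 2t+1$ for the respective hash. Consequently, some honest node received $proposal(h)$ from at least $2t+1$ distinct senders and some honest node received $proposal(h')$ from at least $2t+1$ distinct senders. Eliminating the at most $t$ Byzantine nodes from each pool, and noting that honest nodes emit proposals only via \textbf{broadcast} (never by unicast), this yields two sets $P_h$ and $P_{h'}$ of honest nodes, each of size at least $t+1$, that have actually broadcast $proposal(h)$ and $proposal(h')$, respectively.

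The crux of the plan is an auxiliary claim: no honest node ever broadcasts proposals for two distinct hashes, so $P_h \cap P_{h'} = \emptyset$. I would verify this by auditing the two loci where a proposal is emitted. In Algorithm~\ref{algo:algorithm_bit_triggered_actions}, the triggered rule is gated by ``first fragment from $v_k = \sender$'', which fires at most once per honest node. In Algorithm~\ref{algo:algorithm_bit_state_actions}, the state-based rule is guarded by ``NOT broadcast $proposal(h_{max})$ before''; combined with the (tie-broken) argmax selection of $h_{max}$ and a short induction over the sequence of local events, this yields that once an honest node has broadcast some proposal, no further proposal broadcast can occur.

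Given the auxiliary claim, $P_h$ and $P_{h'}$ are disjoint subsets of the honest nodes, so $|P_h \cup P_{h'}| \ge 2(t+1) = 2t+2 > 2t+1$, contradicting the fact that there are only $2t+1$ honest nodes. The main obstacle is precisely the auxiliary claim: under a Byzantine sender that equivocates and can cause the local $h_{max}$ at an honest receiver to shift over time, one must carefully verify that the guard in the state-based rule precludes a second proposal broadcast for a different hash, either by reading the guard as ``NOT broadcast any proposal before'' or by arguing invariantly that the transitions of $h_{max}$ at an honest node are incompatible with the firing conditions once the first proposal has been sent.
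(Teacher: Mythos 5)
Your proof hinges on the auxiliary claim that no honest node ever broadcasts proposals for two distinct hashes, and that claim is false for this algorithm. The state-based guard in Algorithm~\ref{algo:algorithm_bit_state_actions} is ``\NOT{} broadcast $proposal(h_{max})$ before,'' which is per-hash, not ``not broadcast any proposal before.'' The algorithm is explicitly designed so that an honest node can broadcast two proposals: first for the hash $h$ in the fragment it receives from an equivocating sender (the triggered rule), and later for a different hash $h'$ once it locally stores $t+1$ fragments for $h' = h_{max}$ (the state-based rule). This is precisely why the acceptance check is $|\HashSet(v_k)| < 2$ rather than $|\HashSet(v_k)| < 1$, and the paper's own proof of Lemma~\ref{lemma:accept_all} states directly that ``after an initial proposal for some root hash $h$, $v$ may send a second proposal for $h'$.'' So $P_h \cap P_{h'} = \emptyset$ cannot be established, and the counting step $|P_h \cup P_{h'}| \ge 2t+2$ does not follow. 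You flag this as the main obstacle, and you are right to worry: it is not a gap to be closed by a ``short induction,'' it is a false statement.

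The paper avoids this entirely by a minimality argument rather than a disjointness argument. It fixes $h$ as the first hash for which some node reaches $2t+1$ proposals and $v'$ as the first honest node to broadcast a fragment for any $h' \ne h$. Since $v'$ needed $2t+1$ proposals for $h'$, at least one honest node $v^*$ must have proposed $h$ first and $h'$ second --- which is allowed --- but the second proposal for $h'$ requires $|\FragSet(h')| \ge t+1$ at $v^*$, hence some honest node broadcast a fragment for $h'$ strictly before $v'$ did, contradicting the choice of $v'$. Your opening step (linking fragment broadcasts to $2t+1$ proposals and extracting $t+1$ honest proposers) is sound and matches the paper, but the rest needs to be replaced by this ordering argument.
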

\begin{proof}
Without loss of generality, let $h$ be the first hash for which $2t+1$ proposals are received at some node $v$. Thus, there are at least $t+1$ honest nodes that received their fragments from the $\sender$ and then broadcast their first proposal for root hash $h$.

Let $v'$ be the first node that broadcasts a fragment for a hash $h' \ne h$. Since it must hold that $|\ProposeSet(h')| \ge 2t+1$ at node $v'$, there must be at least $t+1$ honest nodes that broadcast the proposal for root hash $h'$. Consequently, there must be an honest node $v^*$ that first proposed $h$ and then $h'$. However, $v^*$ only sends a proposal for $h'$ if $|\FragSet(h')| \ge t+1$ according to Algorithm~\ref{algo:algorithm_bit_state_actions}, which implies that an honest node must have sent a fragment for root hash $h'$ before, a contradiction to the assumption that $v'$ is the first such node.
\end{proof}

Since algorithm $\mathcal{A}_{bit}$ imposes restrictive rules for the acceptance of received messages, we must show that honest nodes always accept messages from other honest nodes regardless of the messages of Byzantine nodes.

\begin{lemma}
\label{lemma:accept_all}
If an honest node $v$ sends a proposal or fragment message to an honest node $v'$, then $v'$ accepts and processes the received message.
\end{lemma}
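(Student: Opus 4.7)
The plan is to verify, for each kind of message an honest $v$ might emit, that all acceptance checks at $v'$ pass. For a fragment $fragment(h,j,f_j,\pi_j)$ arriving from $v_k$ these are (i) the index condition $j=i \lor j=k$, (ii) validity of the Merkle proof $\pi_j$, and (iii) the hash-history condition $|\HashSet(v_k)|<2 \lor h\in\HashSet(v_k)$; for a proposal $proposal(h)$ only (iii) applies. I would handle (i) and (ii) by a direct case split over the three lines at which an honest node transmits a fragment, and handle (iii) uniformly for proposals and fragments by bounding the number of distinct root hashes an honest node ever carries in a transmitted message.

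Conditions (i) and (ii) are straightforward. The sender's \textbf{send} line targets $v_j$, so $j=i$ at the recipient, and constructs $\pi_j$ directly from the fragment vector, so the proof is valid. The state-based broadcast sends $v$'s own fragment, giving $j=i=k$, and the stored pair $(f_i,\pi_i)=\FragSet(h_{max},v_i)$ could only have been written after \texttt{valid\_merkle\_proof} returned true. The delivery-phase line targets $v_j$, so $j=i$, and recomputes $\pi_j$ from the message just recovered, after checking that the Merkle root of the regenerated fragment vector equals $h_{max}$; validity then follows from the guard $h=h_{max}$.

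The main obstacle is condition (iii). I would prove that any honest $v$ transmits messages for at most two distinct root hashes, which forces $|\HashSet(v)|$ at $v'$ to stay within $2$; and when equality holds, $\HashSet(v)$ already contains every hash $v$ will ever send, so the $h$ of any further message from $v$ lies in it. By Lemma~\ref{lemma:one_fragment_hash}, all fragments that honest nodes broadcast carry a single hash, call it $h^*$. For proposals, the triggered rule fires at most once---on the first fragment $v$ receives from the $\sender$---so it contributes at most one additional hash $h_1$. The state-based rule emits $proposal(h_{max})$ only when $|\FragSet(h_{max})|\ge t+1$; since at most $t$ nodes are Byzantine, this threshold forces at least one honest fragment for $h_{max}$, and Lemma~\ref{lemma:one_fragment_hash} then pins $h_{max}=h^*$. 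Hence every message $v$ ever transmits carries a hash in $\{h_1,h^*\}$.

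Combining the three verifications yields the lemma. The delicate step is reconciling the two proposal-emission sites and noticing that the $t+1$-fragment precondition of the state-based rule, together with Lemma~\ref{lemma:one_fragment_hash}, collapses the hash emitted by that site to the already-counted $h^*$ rather than introducing a third value; without this observation one would naively fear three possible hashes (initial triggered proposal, later state-based proposal, own fragment), which would break condition (iii).
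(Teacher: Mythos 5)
Your proof takes essentially the same approach as the paper: the crux is condition (iii) (the two-hash acceptance rule), and you argue, as the paper does, that the $t{+}1$-fragment precondition of the state-based proposal rule together with Lemma~\ref{lemma:one_fragment_hash} collapses every state-based proposal (and every broadcast fragment) to a single hash, leaving only the initial triggered proposal as a possible second hash, so at most two hashes ever appear. Your explicit verification of conditions (i) and (ii) is correct but routine, which is why the paper omits it.
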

\begin{proof}
According to Algorithm~\ref{algo:algorithm_bit_triggered_actions}, $v'$ accepts messages associated with at most two different hashes for any sender $v$.
After an initial proposal for some root hash $h$, $v$ may send a second proposal for $h'$ if $|\FragSet(h')| \ge t+1$, i.e., there is at least one honest node that has broadcast a fragment for root hash $h'$. Assume that $v$ sends another proposal for a different root hash $h''$, which again implies that at least one honest node must have broadcast a fragment for this root hash, a contradiction to Lemma~\ref{lemma:one_fragment_hash}.

Regarding the transmission of fragments, Lemma~\ref{lemma:one_fragment_hash} also implies that an honest node only sends fragments for one root hash. Assume that $v$ sends fragments for a root hash $h''$ that differs from the hashes $h$ and $h'$ for which it sends proposals. In this case, $v$ must have received at least $t+1$ fragments for root hash $h$ or $h'$. Hence it follows that an honest node broadcast a fragment for a root hash other than $h''$, again contradicting Lemma~\ref{lemma:one_fragment_hash}.
\end{proof}

An important criterion for the totality property is that all honest eventually obtain sufficiently many fragments of a message $m$ and proposals for the corresponding root hash if there is an honest node that delivers $m$. The last lemma states that this is indeed the case.

\begin{lemma}
\label{lemma:honest_nodes_get_fragments}
If an honest node $v$ delivers $m$ with root hash $h$, all honest nodes will eventually store at least $2t+1$ proposals for $h$ and $2t+1$ fragments of $m$.
\end{lemma}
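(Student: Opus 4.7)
The plan is to peel off information from the precondition that $v$ delivered, use Lemma~\ref{lemma:accept_all} to transfer this information to every honest $v'$, and then argue that a cascade brings $v'$ up to the two required thresholds. From $|\ProposeSet(h)| \ge 2t+1$ at $v$ I would extract that at least $t+1$ honest nodes have already broadcast $proposal(h)$. From $|\FragSet(h)| \ge 2t+1$ at $v$ I would analogously extract that at least $t+1$ honest nodes have already broadcast their fragment for $h$: for each honest position $j \ne v$ in $\FragSet(h)$ at $v$, the triggered rule forces the sender of that fragment message to be $v_j$ itself (the only way to satisfy $j = i$ or $j = k$ when $i = v \ne j$), and an honest $v_j$ only ever sends its own fragment through the state-based fragment broadcast; if $v$'s own position is among the honest positions, then $v$ holds its own fragment and $v$'s state-based broadcast fires since $|\ProposeSet(h)| \ge 2t+1$ already holds at $v$. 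Either way, at least $t+1$ honest nodes have broadcast their fragment for $h$ by the time $v$ delivers.

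By Lemma~\ref{lemma:accept_all} every honest $v'$ accepts these proposals and fragments, so $|\ProposeSet(h)| \ge t+1$ and $|\FragSet(h)| \ge t+1$ hold at every honest $v'$ eventually. Moreover, the delivery step executed by $v$ sends $v_j$'s fragment to every $v_j \in V \setminus \ReceiveSet(h)$, which guarantees that every honest $v'$ also ends up with its own fragment for $h$ stored in $\FragSet(h, v')$.

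The main obstacle is the last push: showing that every honest $v'$ in fact broadcasts $proposal(h)$ and, subsequently, its own fragment for $h$, so that the counts rise from $t+1$ to $2t+1$. If $v'$ has not already proposed $h$ via the triggered rule, the state-based proposal rule must fire at $v'$, which requires $h = h_{max}$ at $v'$ at some point while $|\FragSet(h)| \ge t+1$. To argue that $h$ is or becomes $h_{max}$, I would use Lemma~\ref{lemma:one_fragment_hash} to bound $|\ProposeSet(h')|$ for any competing hash $h' \ne h$: no honest node broadcasts a fragment for $h'$, so at every honest node $|\FragSet(h')| \le t+1$ (one own fragment plus at most $t$ Byzantine positions), and hence honest state-based proposals for $h'$ are tightly limited; combined with the partition constraint that the sender's first fragment to each honest node fixes which single hash that node proposes via the triggered rule, $|\ProposeSet(h')|$ at $v'$ cannot persistently dominate the growing $|\ProposeSet(h)|$. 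Once every honest node has broadcast $proposal(h)$, every honest $v'$ has $|\ProposeSet(h)| \ge 2t+1$ with an own fragment in hand, which triggers the state-based fragment broadcast at every honest node; the resulting $2t+1$ honest fragment broadcasts then yield $|\FragSet(h)| \ge 2t+1$ at every honest $v'$.
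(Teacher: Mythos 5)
Your overall decomposition matches the paper's: extract from $v$'s delivery precondition that at least $t+1$ honest nodes have already broadcast proposals and at least $t+1$ honest nodes have already broadcast fragments for $h$, propagate these to every honest $v'$ via Lemma~\ref{lemma:accept_all}, and then cascade the counts up to the $2t+1$ thresholds. Two steps, however, need more care.

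You conclude that every honest $v'$ eventually holds its own fragment because $v$'s delivery step sends $f_j$ to each $v_j \in V \setminus \ReceiveSet(h)$. This only covers honest nodes \emph{outside} $\ReceiveSet(h)$. The paper closes the remaining case explicitly: if $v' \in \ReceiveSet(h)$, then $v'$ sent $v$ either its own fragment (so it obviously has it) or $v$'s fragment for $h$ -- and the latter is only sent during $v'$'s own delivery step, at which point $v'$ has reconstructed $m$ and hence possesses every fragment. You should include this observation, since without it the claim "every honest $v'$ ends up with its own fragment" does not follow.

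The more substantial issue is your last paragraph. You correctly identify that the state-based proposal rule is keyed on $h_{max} = \arg\max_h |\ProposeSet(h)|$, so one must argue that $h$ is, or eventually becomes, $h_{max}$ at each honest $v'$. But what you write is not an argument: "honest state-based proposals for $h'$ are tightly limited" gives no bound, and "$|\ProposeSet(h')|$ ... cannot persistently dominate the growing $|\ProposeSet(h)|$" is both vague and circular -- $|\ProposeSet(h)|$ at $v'$ grows past $t+1$ only if additional honest nodes propose $h$, which is precisely the conclusion you are trying to reach. Your bound $|\FragSet(h')| \le t+1$ for every $h' \ne h$, drawn from Lemma~\ref{lemma:one_fragment_hash}, is the right ingredient, but it concerns fragment counts, whereas $h_{max}$ is defined by proposal counts; the connection between the two is exactly what is missing. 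For comparison, the paper's own proof does not dwell on this point at all: it passes directly from "every honest node eventually holds $t+1$ fragments of $m$" to "every honest node broadcasts $proposal(h)$" without discussing the $h_{max}$ condition, treating the firing of the state-based proposal rule for $h$ as immediate. Your instinct to examine the $h_{max}$ condition is reasonable, but the sketch you give does not close it.
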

\begin{proof}
Assume that $v$ delivers $m$ with root hash $h$.
Since $|\FragSet(h)| \ge 2t+1$, $v$ received fragments of $m$ from at least $t+1$ honest nodes, i.e., every honest node eventually receives at least $t+1$ fragments of $m$. As a result, every honest node broadcasts $proposal(h)$ at some point and thus $|\ProposeSet(h)| \ge 2t+1$ eventually holds at all honest nodes.

Node $v$ adds all nodes from which it received fragments for root hash $h$
to the set $\ReceiveSet(h)$. For any node $v' \in \ReceiveSet(h)$ it holds that $v'$ either sent its own fragment or $v$'s fragment. However, the latter case also implies that $v'$ must have its own fragment.
According to Algorithm~\ref{algo:algorithm_bit_triggered_actions}, $v$ sends $f_j$ to $v_j$ for all $v_j \in V \setminus \ReceiveSet(h)$, guaranteeing that these nodes eventually receive and, due to Lemma~\ref{lemma:accept_all}, store their fragments as well. Thus, it eventually holds that
$|\ProposeSet(h)| \ge 2t+1$ and $\FragSet(h) \ne \bot$ at all honest nodes, causing them to broadcast their fragments. Consequently, every honest node will eventually receive at least $2t+1$ fragments of $m$.
\end{proof}

We are now in the position to prove the following main result.

\begin{theorem}
Algorithm $\mathcal{A}_{bit}$ implements reliable broadcast in the asynchronous communication model with $t < n/3$ Byzantine nodes.
\end{theorem}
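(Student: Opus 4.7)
The plan is to verify each of the four properties of Definition~\ref{def:reliable_broadcast} in turn, using Lemmas~\ref{lemma:one_fragment_hash}, \ref{lemma:accept_all}, and \ref{lemma:honest_nodes_get_fragments} together with a single counting bound that will do most of the work. Integrity is immediate: the $done$ flag guards the delivery block and is set to $true$ as soon as the block is entered, so no honest node ever delivers more than once.

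For Validity I would assume that the sender is honest and broadcasts $m$ with root hash $h$. Each of the $2t+1$ honest nodes $v_j$ receives $fragment(h,j,f_j,\pi_j)$ with a valid Merkle proof and broadcasts $proposal(h)$ via the triggered action; by Lemma~\ref{lemma:accept_all} every honest node eventually accepts all of these, so $|\ProposeSet(h)| \ge 2t+1$. The key step is to observe, using Lemma~\ref{lemma:one_fragment_hash}, that no honest node ever broadcasts a fragment for a hash $h' \ne h$, so $\FragSet(h')$ at an honest recipient $v_i$ can only be populated through the fragment-acceptance rule ($j=i$ or $j=k$) by Byzantine senders: at most $t$ cells indexed by Byzantine nodes plus at most one cell indexed by $v_i$ itself, giving $|\FragSet(h')| \le t+1 < 2t+1$. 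Consequently $h_{max} = h$ once the honest proposals are in, every honest node then broadcasts its own fragment for $h$, the delivery block eventually triggers, the recomputed root hash matches $h_{max}$, and $m$ is delivered.

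Agreement and Totality both reduce to this same counting bound. For Agreement, if two honest nodes deliver $m$ and $m'$ with hashes $h_1$ and $h_2$, each delivery requires $|\FragSet(h_i)| \ge 2t+1$, so at least $t$ of the fragments at each node must come from honest broadcasts; Lemma~\ref{lemma:one_fragment_hash} then forces $h_1 = h_2 = h^*$ for the unique hash $h^*$ for which honest fragments are ever broadcast. Collision resistance of the hash and validity of the Merkle proofs pin each stored fragment at position $j$ to the genuine codeword symbol of $h^*$, and the optimal reception efficiency of the $(n,2t+1)$-code makes the recovered message unique, so $m = m'$. For Totality, Lemma~\ref{lemma:honest_nodes_get_fragments} already yields $|\ProposeSet(h)| \ge 2t+1$ and $|\FragSet(h)| \ge 2t+1$ at every honest node; the same $t+1$ bound rules out any competing $h' \ne h$ ever meeting the $|\FragSet(h_{max})| \ge 2t+1$ precondition, so the delivery block can only trigger with $h_{max} = h$, at which point the hash check succeeds and $m$ is delivered.

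The hard part will be making the counting bound $|\FragSet(h')| \le t+1$ fully rigorous against arbitrary Byzantine scheduling. In particular one has to carefully track the $j = i$ versus $j = k$ cases of the fragment-acceptance rule together with the $|\HashSet(v_k)| < 2$ restriction that caps each Byzantine sender's contribution across hashes, and confirm that the $t+1$-fragment proposal rule cannot indirectly coax an honest node into contributing a fragment for the wrong hash before Lemma~\ref{lemma:one_fragment_hash} applies.
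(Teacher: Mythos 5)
Your proof follows the paper's structure — verify the four properties separately using Lemmas~\ref{lemma:one_fragment_hash}, \ref{lemma:accept_all}, and \ref{lemma:honest_nodes_get_fragments} — with one genuine simplification and one gap. For Integrity you argue directly from the $done$ flag rather than from Lemma~\ref{lemma:one_fragment_hash} as the paper does; this is simpler and equally valid, since $done$ is set unconditionally on entry to the delivery block. Your Agreement and Totality treatments track the paper's, and the added remark tying uniqueness of the recovered message to collision resistance of the Merkle root and the optimal reception efficiency of the $(n,2t+1)$-code is a helpful clarification the paper omits.

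The Validity argument, however, makes a leap that the concern you flag at the end correctly identifies. You bound $|\FragSet(h')| \le t+1$ for any $h' \ne h$ and then infer ``$h_{max} = h$ once the honest proposals are in,'' but $h_{max}$ is determined by $\ProposeSet$, not by $\FragSet$, and the state-based proposal rule fires at the threshold $|\FragSet(h_{max})| \ge t+1$ — which is exactly the ceiling of your bound. An adversary can drive $\FragSet(h')$ up to $t+1$ at an honest node $v_i$ (via $t$ own-fragments from Byzantine nodes plus one forged copy of $v_i$'s fragment for $h'$) and, by delaying the honest sender's message, arrange $h_{max} = h'$ locally at $v_i$ at that moment; the rule then fires and $v_i$ proposes $h'$. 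So your $\FragSet$ bound does not, by itself, cap $|\ProposeSet(h')|$ or establish $h_{max} = h$; one still has to argue that the legitimate hash $h$ wins the race to $2t+1$ proposals, or that a spurious hash cannot attract $2t+1$ of them. To be fair, the paper's own Validity proof asserts that $2t+1$ proposals accumulate ``only for hash $h$'' without elaboration, so it elides the same step; but since you explicitly introduce the counting bound as ``the key step,'' you should either close this loop or downgrade the bound's role to what it actually delivers (that no $h' \ne h$ can ever reach the $|\FragSet(h')| \ge 2t+1$ delivery threshold on Byzantine fragments alone).
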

\begin{proof}
The four conditions of reliable broadcast are proved separately.

\textbf{Validity}: If the sender $v$ of a message $m$ with root hash $h$ is honest, it sends the fragment $f_j$ to $v_j$ for all $v_j \in V$. Subsequently, every honest nodes broadcasts $proposal(h)$. Since all honest nodes eventually receive at least $2t+1$ proposals for hash $h$ (and only for hash $h$), they all broadcast their fragments. Thus, eventually $|\FragSet(h)| \ge 2t+1$ and $|\ProposeSet(h)| \ge 2t+1$ holds at all honest nodes, triggering the delivery of $m$.

\textbf{Agreement}: Assume for the sake of contradiction that honest nodes $v$ and $v'$ deliver distinct messages $m$ and $m'$ with root hashes $h$ and $h'$, respectively. Since $|\FragSet(h)| \ge 2t+1$ at $v$ and $|\FragSet(h')| \ge 2t+1$ at $v'$, there must be honest nodes that have broadcast fragments for $h$ and $h'$, which contradicts Lemma~\ref{lemma:one_fragment_hash}.

\textbf{Integrity}: Lemma~\ref{lemma:one_fragment_hash} implies that there can only be sufficiently many fragments for at most one hash and hence at most one message can be delivered.

\textbf{Totality}: Let $v$ be a node that delivers a message $m$ with root hash $h$. Due to Lemma~\ref{lemma:honest_nodes_get_fragments}, it holds eventually that $|\FragSet(h)| \ge 2t+1$ and $|\ProposeSet(h')| \ge 2t+1$ at all honest nodes. Since $v$ managed to verify the correctness of the fragments and the corresponding root hash, the verification also succeeds at all other nodes. Thus, all honest nodes eventually deliver $m$.
\end{proof}

The following theorem states the communication complexity of algorithm $\mathcal{A}_{bit}$.

\begin{theorem}[Communication complexity]
\label{theorem:bit_communication_complexity}
It holds for algorithm $\mathcal{A}_{bit}$ that $\mathcal{L}(n) = 2$.
\end{theorem}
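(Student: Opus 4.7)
The plan is to bound the total bits sent by honest nodes by $2n\ell$ plus terms independent of $\ell$, so that dividing by $n\ell$ and letting $\ell\to\infty$ yields $\mathcal{L}(n) = 2$. I would partition the outbound traffic into four categories: (i) the sender's initial fragment dispersal, (ii) the broadcast of a node's own fragment after collecting $2t+1$ proposals, (iii) the targeted sends inside the delivery block, and (iv) proposal messages.

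For category (ii), Lemma~\ref{lemma:one_fragment_hash} ensures that every honest node broadcasts its own fragment for at most one root hash, consisting of at most $n$ copies of a payload of size $\ell/(2t+1)$ plus a Merkle proof of size $O(\kappa\log n)$. Summed over the $2t+1$ honest nodes, this contributes at most $n\ell + O(n^2\kappa\log n)$ bits. For category (iii), the $done$ flag guarantees that the delivery block executes at most once per honest node, and its loop sends one fragment to each node in $V\setminus\ReceiveSet(h_{max})$, of which there are at most $n$; hence category (iii) contributes at most another $n\ell + O(n^2\kappa\log n)$ bits.

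For categories (i) and (iv), I would absorb the contributions into the $\ell$-independent remainder. The sender's initial dispersal transmits $n$ fragments of size $\ell/(2t+1)$ each, contributing $O(\ell) + O(n\kappa\log n)$ bits, a lower-order term relative to the dominant $2n\ell$. For proposals, combining the acceptance filter $|\HashSet(v_k)| < 2$ with Lemma~\ref{lemma:one_fragment_hash}, I would argue that every honest node broadcasts a proposal for only a constant number of distinct hashes, which gives $O(n^2\kappa)$ total proposal bits, independent of $\ell$.

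Summing the four categories yields $\mathcal{C}(n,\ell) \le 2n\ell + C(n,\ell)$, where $C(n,\ell)$ collects all lower-order terms; dividing by $n\ell$ and taking $\ell\to\infty$ then gives $\mathcal{L}(n) = 2$. The main obstacle I anticipate is the proposal-traffic accounting: because $h_{max}$ can shift as new proposals arrive, a naive count might attribute arbitrarily many proposal broadcasts to a single honest node; tightly leveraging $|\HashSet(v_k)|<2$ together with Lemma~\ref{lemma:one_fragment_hash} to cap the number of hashes $h$ for which $|\FragSet(h)|\ge t+1$ ever holds is the key step to keeping this traffic $\ell$-independent.
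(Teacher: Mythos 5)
Your decomposition into four categories mirrors the paper's accounting, and your invocation of Lemma~\ref{lemma:one_fragment_hash} (each honest node broadcasts its own fragment for one hash) and of the $|\HashSet(v_k)|<2$ filter (to cap proposal traffic at $O(\kappa n^2)$) are both correct and match the paper's reasoning. However, the quantitative step fails in a way that matters: with your bounds, $\mathcal{C}(n,\ell)$ is not $2n\ell + o_{\ell}(n\ell)$ but roughly $2n\ell + \tfrac{3}{2}\ell + O(n^2\kappa\log n)$, and the term $\tfrac{3}{2}\ell$ from category~(i) is \emph{not} lower order in the limit that matters: $\lim_{\ell\to\infty}\tfrac{3\ell/2}{n\ell} = \tfrac{3}{2n}$, a nonzero constant for fixed $n$. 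So what you actually establish is $\mathcal{L}(n)\le 2 + \tfrac{3}{2n}$, which is strictly greater than $2$ for every $n$ and therefore does not yield the theorem.

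The missing idea is the key observation the paper uses in category~(iii): when a node $v_i$ enters the delivery block, it has $|\FragSet(h_{max})|\ge 2t+1$, and (at $v_i$) a stored entry $\FragSet(h_{max},v_j)\ne\bot$ with $v_j\ne v_i$ can only have been written upon receipt of a fragment message with $j=k$, which simultaneously adds $v_j=v_k$ to $\ReceiveSet(h_{max})$. Hence $|\ReceiveSet(h_{max})|\ge 2t$, so $|V\setminus\ReceiveSet(h_{max})|\le n-2t = t+1 < n/3 + 1$, not $n$. This is precisely the mechanism that makes the re-dispersal cheap. Plugging it in, category~(iii) costs roughly $t\ell < \tfrac{n}{3}\ell$ rather than $n\ell$, so $(i)+(ii)+(iii) \lesssim \tfrac{3}{2}\ell + n\ell + \tfrac{n}{3}\ell = \tfrac{4}{3}n\ell + \tfrac{3}{2}\ell < 2n\ell$ already for $n\ge 4$, and the sender's $\tfrac{3}{2}\ell$ is comfortably absorbed into the slack. (The paper states the bound in the slightly looser form $(n-1)|f| + n(n-1+t)|f| + O(\kappa n^2) < 2n|m| + O(\kappa n^2\log n)$, which again depends crucially on the factor $t$, not $n$, for the re-dispersal targets.) Without the $|V\setminus\ReceiveSet(h_{max})|\le t$ bound, the argument collapses back to the $3$-overhead of the classical scheme asymptotically, so this is not a cosmetic omission but the load-bearing step of the whole complexity claim.
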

\begin{proof}
Let $|f| \coloneqq \frac{3}{2}|m|/n + O(\kappa\log(n))$ denote the size of the largest fragment message sent by an honest node. If the sender is honest, $|f|$ corresponds to the size of a fragment message containing a valid fragment of message $m$.
The sender may send fragment messages to all other nodes for a total of at most $(n-1)|f|$ bits.
Each honest node may broadcast one fragment message containing its fragment and at most two proposals of size $O(\kappa)$.
After reconstructing the message $m$, an honest node further sends a fragment message containing $f_j$ to all $v_j \in V \setminus \ReceiveSet(h)$, where $|V \setminus \ReceiveSet(h)| \le t < n/3$. Not counting messages that nodes send to themselves, the communication complexity $\mathcal{C}(n, |m|)$ is therefore upper bounded by
\begin{displaymath}
(n - 1)|f| + n \cdot (n - 1 + t)|f| + O(\kappa n^2) < 2n|m| + O(\kappa n^2\log(n)),
\end{displaymath}
and therefore $\mathcal{L}(n) = \lim_{|m|\rightarrow \infty} \frac{1}{n|m|}\mathcal{C}(n, |m|) = 2$.
\end{proof}

As stated in \S\ref{sec:model}, we consider the good-case time complexity with an honest sender. The following theorem states the time complexity of $\mathcal{A}_{bit}$ for this case.

\begin{theorem}[Time complexity]
\label{theorem:bit_time_complexity}
If the sender is honest, $\mathcal{A}_{bit}$ has a time complexity of $3$.
\end{theorem}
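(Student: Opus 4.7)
The plan is to trace the protocol round by round under the assumption that the sender is honest, and to argue that every honest node completes the delivery block of Algorithm~\ref{algo:algorithm_bit_state_actions} by time $3$. Since the maximum message delay is normalized to $1$, it suffices to identify three batches of transmissions and show that each finishes within one time unit of the previous one.

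First, I would handle the bootstrap round. At time $0$, the honest sender executes \texttt{reliable\_broadcast}$(m)$, computes the fragments $f_1,\dots,f_n$ and root hash $h$, and sends $fragment(h, j, f_j, \pi_j)$ to every $v_j \in V$. By time $1$, every honest $v_j$ has received this message; it passes all acceptance checks in Algorithm~\ref{algo:algorithm_bit_triggered_actions} (first fragment from the sender, so $|\HashSet(v_k)| < 2$, and the Merkle proof is valid because it comes from an honest sender), so $v_j$ broadcasts $proposal(h)$.

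Second, I would argue about the proposal round. By time $2$, every honest node has received $proposal(h)$ from all $2t+1$ honest nodes. By Lemma~\ref{lemma:accept_all} these proposals are accepted and stored, so $|\ProposeSet(h)| \ge 2t+1$ at every honest node. Moreover, since $h$ is the only hash for which any honest node has broadcast a proposal (by Lemma~\ref{lemma:one_fragment_hash} no honest node will ever broadcast a fragment for a competing hash, so the $t{+}1$-fragment trigger for a second proposal is never activated at honest nodes within this window), we have $h = h_{max}$ at every honest node. The state-based action then fires and each honest node broadcasts its own fragment message $fragment(h, i, f_i, \pi_i)$.

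Third, I would close out the delivery round. By time $3$, every honest node has received the fragment messages from all $2t+1$ honest nodes; again by Lemma~\ref{lemma:accept_all} these are accepted, so $|\FragSet(h_{max})| \ge 2t+1$. Both guards of the delivery block of Algorithm~\ref{algo:algorithm_bit_state_actions} now hold, so each honest node recovers $m$, recomputes the root hash (which matches $h_{max}$ since the fragments are valid), sends fragments to the nodes in $V \setminus \ReceiveSet(h_{max})$, and calls \texttt{deliver}$(m)$. This last sub-step is performed locally in the same round and does not add to the time complexity, since delivery at $v_i$ does not wait for these extra transmissions to be acknowledged. Hence every honest node delivers $m$ by time $3$, establishing the claim. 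The only subtle point to get right is that $h = h_{max}$ at the end of the proposal round; once that is pinned down using Lemma~\ref{lemma:one_fragment_hash}, the rest is a clean round-counting argument.
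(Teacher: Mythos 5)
Your proof is correct and follows essentially the same round-by-round tracing argument as the paper, just with more detail spelled out (in particular, why $h = h_{max}$ at time $2$ and why Lemma~\ref{lemma:accept_all} guarantees acceptance). The paper's own proof is a terser version of exactly this three-round accounting.
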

\begin{proof}
If the sender is honest, all honest nodes receive their fragments and send the proposal for the corresponding root hash after at most $1$ time unit. Thus, after at most $2$ time units, all honest nodes get at least $2t+1$ proposals and broadcast their fragments. After at most $3$ time units, the honest nodes receive at least $2t+1$ fragments and deliver the message.
\end{proof}

If the sender is Byzantine, an honest node may still deliver $m$ eventually. It is easy to see that every honest node delivers $m$ at most $3$ time units later in this case.

As far as the space complexity is concerned, recall that $\ell_{max}$ denotes the largest permissible message size, which is a lower bound on the space complexity.
Given this bound, the following result shows that honest nodes require little additional storage space.

\begin{theorem}[Space complexity]
\label{theorem:bit_space_complexity}
Algorithm $\mathcal{A}_{bit}$ has a space complexity of $2\ell_{max} + O(n\kappa)$.
\end{theorem}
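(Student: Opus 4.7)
The plan is to split the storage at an honest node $v_i$ into the fragment map $\FragSet$ and the remaining auxiliary state ($\HashSet$, $\ReceiveSet$, $\ProposeSet$, and the Boolean $done$), and to show that only $\FragSet$ contributes non-trivially to the $\ell_{max}$ term.

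For the auxiliary structures, I would use the acceptance rule, which enforces $|\HashSet(v_k)| \le 2$ for every $v_k$, to bound the set $\mathbf{H}$ of distinct hashes that ever appear in $v_i$'s state by at most $2n$. Each hash occupies $O(\kappa)$ bits, each subset of $V$ stored in $\ReceiveSet(h)$ or $\ProposeSet(h)$ fits in $n$ bits, each Merkle proof attached to a stored fragment is $O(\kappa\log n)$ bits, and $done$ is a single bit; summing across all these structures gives $O(n\kappa)$ bits after absorbing $\log n$ into $\kappa$.

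The main task is to bound the fragment payloads in $\FragSet$. The key lever is Lemma~\ref{lemma:one_fragment_hash}: honest nodes broadcast fragments only for a single hash $h^*$, so for every $h \ne h^*$ the only senders that can populate $\FragSet(h, v_j)$ at $v_i$ are Byzantine. I would separately bound (i) the non-empty slots keyed by $h^*$, which number at most $n$, and (ii) the non-empty slots keyed by other hashes, constrained by the two-hash budget of each Byzantine sender together with the observations that $\FragSet(h, v_k)$ for $k \ne i$ is writable only by $v_k$ itself, and that the shared slot $\FragSet(h, v_i)$ is counted once per hash regardless of how many Byzantine senders attempt to fill it. The total count of non-empty $(h, v_j)$-pairs should then come out to at most $2(2t+1)$, and multiplying by the fragment size $|f| \le \ell_{max}/(2t+1)$ closes the fragment-size bound at $2\ell_{max}$.

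The main obstacle is performing this combined count tightly, so that the $h^*$ and non-$h^*$ contributions do not both reach their naive maxima simultaneously. The balancing I would use exploits that each Byzantine sender that writes its own fragment for $h^*$ consumes one of its two hash slots, leaving strictly smaller capacity for non-$h^*$ writes; aggregating over all $t$ Byzantine nodes keeps the total within $2(2t+1)$ non-empty slots, after which the theorem follows by adding the $O(n\kappa)$ auxiliary term.
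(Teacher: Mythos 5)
Your split into auxiliary state (bounded by $O(n\kappa)$ via the two-hash rule) and fragment payloads matches the paper's structure. For the fragment term, however, the paper does not attempt a per-slot census of $\FragSet$; it simply tallies per \emph{sender}: by Lemma~\ref{lemma:one_fragment_hash} each honest node sends fragments for exactly one root hash, and the $\HashSet$-based acceptance rule caps every Byzantine sender at two hashes, so the number of stored fragments is charged as $t\cdot 2 + (2t+1)\cdot 1 = 4t+1 < \tfrac{4}{3}n$, each of size $\tfrac{3}{2}\ell_{max}/n$, giving the $2\ell_{max}$ bound directly.

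Your per-slot census, on the other hand, does not close as stated. Consider an adversary in which all $t$ Byzantine nodes spend both of their $\HashSet$ slots on hashes that are distinct from $h^*$ and pairwise distinct across Byzantine nodes, and each such $v_k$ sends, for each of its two hashes $h$, both $fragment(h,k,f_k,\pi_k)$ and $fragment(h,i,f_i,\pi_i)$; both are accepted by the $j=i$/$j=k$ rule and the two-hash filter. Because the $2t$ hashes are pairwise distinct, none of the shared $\FragSet(\cdot, v_i)$ slots coincide, so the Byzantine nodes alone populate $4t$ distinct non-$h^*$ slots, while the $h^*$ slots still reach $2t+1$ from honest nodes. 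That is $6t+1 > 2(2t+1)$ non-empty slots for every $t\ge 1$. The balancing you invoke -- that a Byzantine write for $h^*$ consumes one of its two hash slots -- never triggers here, since the adversary deliberately avoids $h^*$; there is no offsetting reduction in the $h^*$ budget, and the two contributions do attain their maxima simultaneously. Hence the claimed $2(2t+1)$ slot count, and with it the $2\ell_{max}$ figure, is not established by your argument. To repair it you would either have to revert to the paper's sender-attributed count of $4t+1$ (which implicitly charges at most one stored fragment per hash-sender pair) or tighten the algorithm's acceptance rule for the $v_i$-indexed slots (e.g., accept $v_i$'s own fragment only from the $\sender$, or from a node that has already supplied its own fragment for the same hash) so that Byzantine nodes cannot inflate the number of non-empty $\FragSet(\cdot, v_i)$ entries beyond one.

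Two minor points: your fragment-size bound $\ell_{max}/(2t+1)$ is correct and slightly tighter than the paper's $\tfrac{3}{2}\ell_{max}/n$, but this does not compensate for the slot overcount; and your $O(n\kappa)$ bookkeeping for $\HashSet$, $\ReceiveSet$, $\ProposeSet$, and Merkle proofs is fine and essentially identical to the paper's treatment.
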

\begin{proof}
While honest nodes send fragments for at most one root hash, Byzantine nodes cannot send fragments for more than two hashes, resulting in a total of at most $t\cdot 2 + (2t+1)\cdot 1 < \frac{4}{3}n$ fragments of size $\frac{3}{2}\ell_{max}/n$ each. Thus, the space required to store received fragments is upper bounded by $2\ell_{max}$.
The other data structures merely contain identifiers and hashes of size $O(\kappa)$. Since they may contain entries for at most $2$ hashes per node, it follows that the total size of these data structures is bounded by $O(n\kappa)$.
\end{proof}

Note that the space complexity can trivially be lowered to $\frac{3}{2}\ell_{max} + O(n\kappa)$ by introducing the rule that received fragments for different hashes from the same node are rejected.
This rule is omitted from Algorithm~\ref{algo:algorithm_bit_triggered_actions} for ease of exposition.

\section{Variant with Threshold Signatures}
\label{sec:variant}

\subsection{Overview}

In this section, a variant of $\mathcal{A}_{bit}$, called $\mathcal{A}_{sig}$, is introduced with an optimal time complexity of $2$ while still guaranteeing that $\mathcal{L}(n) < 3$.
The challenge is that this time complexity cannot be reached when sending proposals to agree on a root hash first before sending fragments. However, if fragments are broadcast right away, a Byzantine sender may send fragments for different root hashes to different nodes. Naturally, honest nodes cannot simply broadcast any received fragments without an adverse effect on the communication complexity.

Algorithm $\mathcal{A}_{sig}$ addresses this challenge by using \emph{threshold signatures}, i.e., we assume that each node has the keying material to \emph{threshold-sign} messages and that every node can verify the validity of \emph{signature shares} and signatures of all nodes. At least $2t+1$ signature shares are required to create a valid signature.
The proposals of $\mathcal{A}_{bit}$ are replaced by messages containing a root hash and a signature share or a signature. The crucial advantage is that a signature serves as proof that the sender has received signature shares from $2t+1$ nodes, ensuring that the execution can make progress whenever an honest nodes meets the conditions to deliver $m$.

\subsection{Description}

Algorithm $\mathcal{A}_{sig}$ uses \emph{signed proposal messages} of the form $proposal(h, \sigma)$, where $\sigma$ is either a \emph{signature share} of $h$ of a particular node or a \emph{signature} of $h$ derived from at least $2t+1$ signature shares. Note that, in practice, it is a signature of $h$ and the identifier in order to tie the signature to a particular execution.
The algorithm still uses the hash maps $\FragSet$, $\HashSet$, and $\ReceiveSet$. Signature shares are collected in the hash map $\SigSet: \mathcal{H} \times V \rightarrow \mathcal{S}$, where $\mathcal{S}$ denotes the set of all possible signature shares.
We further define that $S(h) \coloneqq \cup_{v \in V} \SigSet(h,v)$.
In this section, $h_{max}$ is defined as $h_{max} \coloneqq \arg\max_{h \in \mathbf{H}} |\SigSet(h)|$.
In addition to $done$, each node further stores $h^*$, the root hash of the message that will be delivered if set, and the corresponding signature $\sigma^*$ (initially, $h^* = \sigma^* = \bot$).

\begin{algorithm}[t]
\begin{algorithmic}
\caption{$\mathcal{A}_{sig}$: Triggered actions at node $v_i$. Initially, $\FragSet = \ReceiveSet = \HashSet = \SigSet = \{\}$, $h^* = \sigma^* = \bot$.}
\label{algo:algorithm_sig_triggered_actions}
\IF {\texttt{reliable\_broadcast($m$)} invoked \AND $v_i = \sender$}
\STATE Execute \texttt{reliable\_broadcast($m$)} of Algorithm~\ref{algo:algorithm_bit_triggered_actions}
\ENDIF
\STATE
\IF {\textbf{received} $fragment(h, j, f_j, \pi_j)$ from $v_k$ \AND ($j=i$ \OR $j=k$)}
\IF {($|\HashSet(v_k)| < 2$ \OR $h \in \HashSet(v_k)$) \AND \texttt{valid\_merkle\_proof}($h, f_j, j, \pi_j$)}
\STATE $\HashSet(v_k) \coloneqq \HashSet(v_k) \cup \{h\}$, $\ReceiveSet(h) \coloneqq \ReceiveSet(h) \cup \{v_k\}$, $\FragSet(h, v_j) \coloneqq (f_j, \pi_j)$
\IF {$i = j$ \AND first fragment from $v_k = \sender$}
\STATE $\sigma_i \coloneqq$ \texttt{threshold\_sign($h$)}
\STATE \textbf{broadcast} $[proposal(h, \sigma_i)$, $fragment(h, f_i, i, \pi_i)]$
\ENDIF
\ENDIF
\ENDIF
\STATE
\IF {\textbf{received} $proposal(h, \sigma)$ from $v_k$}
\IF {\texttt{valid\_signature($h, \sigma$)}}
\STATE $\sigma^* \coloneqq \sigma$, $h^* \coloneqq h$
\ELSIF {(($|\HashSet(v_k)| < 2$ \OR $h \in \HashSet(v_k)$) \AND \texttt{valid\_share}($h, v_k, \sigma$)}
\STATE $\HashSet(v_k) \coloneqq \HashSet(v_k) \cup \{h\}$, $\SigSet(h, v_k) \coloneqq \sigma$
\ENDIF
\ENDIF
\end{algorithmic}
\end{algorithm}

The steps of algorithm $\mathcal{A}_{sig}$ for triggered actions are shown in Algorithm~\ref{algo:algorithm_sig_triggered_actions}.
When \texttt{reliable\_broadcast} is invoked, exactly the same steps as in Algorithm~\ref{algo:algorithm_bit_triggered_actions} are executed.
A fragment message is handled the same way as in $\mathcal{A}_{bit}$ apart from two differences: $h$ is signed and both $proposal(h, \sigma_i)$ and $fragment(h, f_i, i, \pi_i)$ are broadcast if it is the first (valid) fragment received from the sender.
When receiving $proposal(h, \sigma)$ from some node $v_k$, the action depends on $\sigma$. If it is a valid signature, $\sigma^*$ and $h^*$ are set to $\sigma$ and $h$, respectively. Otherwise, if it is a valid signature share and $\HashSet(v_k)$ does not contain two hashes different from $h$, then $h$ is added to $\HashSet(v_k)$ and $\SigSet(h, v_k)$ is set to $\sigma$.

The state-based actions are shown in Algorithm~\ref{algo:algorithm_sig_state_actions}.
If at least $2t+1$ signature shares have been collected for some root hash $h_{max}$, then $\sigma^*$ is set to the computed signature and $h^*$ is set to $h_{max}$. Node $v_i$ broadcasts $f_i$ (including $\pi_i$) if $h^*$ is set, $f_i$ is locally available, and it has not been broadcast before.
Lastly, if $h^* \neq \bot$ and $|\FragSet(h^*)| \ge 2t+1$, $v_i$ broadcasts $proposal(h^*, \sigma^*)$ before executing the same delivery steps as in Algorithm~\ref{algo:algorithm_bit_state_actions}.

\subsection{Analysis}

It is evident from the description of algorithm $\mathcal{A}_{sig}$ that honest nodes must set $h^*$ to the same hash as otherwise the integrity property may be violated.
The following lemma states that honest nodes indeed cannot set $h^*$ to different hashes.

\begin{lemma}
\label{lemma:sig_same_h_id}
If honest nodes set $h^*$, they set it to the same hash.
\end{lemma}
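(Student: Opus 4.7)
The plan is to pin down the only two places where $h^*$ can be assigned and argue that both require a valid threshold signature — equivalently, $2t+1$ valid signature shares — on the hash. Looking at Algorithm~\ref{algo:algorithm_sig_triggered_actions}, $h^*$ is set when a $proposal(h,\sigma)$ with a valid signature $\sigma$ is received, and in Algorithm~\ref{algo:algorithm_sig_state_actions} it is set to $h_{max}$ precisely when $|\SigSet(h_{max})| \ge 2t+1$ valid shares have been collected (from which a signature can be computed). So the claim reduces to showing there is at most one hash on which a valid signature can ever be assembled.

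The key observation is a restriction on what honest nodes will sign. Inspecting the code, an honest node $v_i$ invokes \texttt{threshold\_sign($h$)} only inside the fragment-handling branch, and only when $i=j$ \textbf{and} it is the \emph{first fragment received from $v_k=\sender$}. Consequently each honest node produces a signature share for at most one hash during the whole execution. Combined with the cryptographic assumption from \S\ref{sec:model} that signature shares cannot be spoofed, this means that across all hashes there are at most $2t+1$ honest shares in total, with no single honest node contributing to more than one hash.

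Now assume, for contradiction, that valid signatures (equivalently, $2t+1$ valid shares) exist for two distinct hashes $h \neq h'$. Since at most $t$ Byzantine nodes can contribute shares, each of $h$ and $h'$ must have collected at least $(2t+1)-t = t+1$ shares from honest nodes. Let $H$ and $H'$ be the sets of honest signers of $h$ and $h'$ respectively. By the previous paragraph, $H \cap H' = \emptyset$, so $|H|+|H'|=|H\cup H'| \le 2t+1$. But $|H|+|H'| \ge 2(t+1) = 2t+2$, a contradiction.

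Hence at most one hash ever admits a valid signature (or a set of $2t+1$ valid shares), and every honest node that assigns $h^*$ must assign it to that unique hash. The only subtlety to verify carefully is that both assignment sites truly go through a $2t+1$-share threshold: the triggered-action site does so via \texttt{valid\_signature}, and the state-based site does so via the $|\SigSet(h_{max})| \ge 2t+1$ guard; once that is checked, the pigeonhole counting above closes the argument.
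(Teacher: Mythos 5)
Your proof is correct and follows essentially the same route as the paper's: observe that each honest node threshold-signs at most one hash (only on the first fragment from the sender), then argue that two distinct hashes cannot each accumulate $2t+1$ valid shares because that would force some honest node to have signed twice. You make the pigeonhole counting explicit (at least $t+1$ honest signers per hash, disjoint sets, exceeding $2t+1$ honest nodes) and carefully enumerate both assignment sites for $h^*$, which is a welcome bit of extra rigor, but the underlying argument is identical to the paper's.
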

\begin{proof}
According to Algorithm~\ref{algo:algorithm_sig_triggered_actions}, only the first message from the $\sender$ is threshold-signed, i.e., every honest node provides a signature share for at most one hash. However, setting $h^*$ to some hash $h$ requires $2t+1$  signature shares of $h$. If we assume that there are two different hashes for which $2t+1$ signature shares have been collected, there must be at least one honest node that threshold-signed two different hashes, which contradicts the rule that honest nodes only threshold-sign at most one hash.
\end{proof}

\begin{algorithm}[t]
\begin{algorithmic}
\caption{$\mathcal{A}_{sig}$: State-based actions at node $v_i$. Initially, $\FragSet = \ReceiveSet = \SigSet = \{\}$, $h^* = \sigma^* = \bot$, and $done = false$. Let $h_{max} \coloneqq \arg\max_{h \in \mathcal{H}} |\SigSet(h)|$.}
\label{algo:algorithm_sig_state_actions}
\IF {$|\SigSet(h_{max})| \ge 2t+1$ \AND $h^* = \bot$}
\STATE $\sigma^* \coloneqq$ \texttt{compute\_signature($\SigSet(h_{max})$)}
\STATE $h^* \coloneqq h_{max}$
\ENDIF
\STATE
\IF {$h^* \ne \bot$ \AND \NOT broadcast $(f_i, \pi_i) \coloneqq \FragSet(h^*,i) \ne \bot$ before}
\STATE \textbf{broadcast} $fragment(h^*, f_i, i, \pi_i)$
\ENDIF
\STATE
\IF {$h^* \ne \bot$ \AND $|\FragSet(h^*)| \ge 2t+1$ \AND \NOT $done$}
\STATE \textbf{broadcast} $proposal(h^*,\sigma^*)$
\STATE Execute delivery as in Algorithm~\ref{algo:algorithm_bit_state_actions}
\ENDIF
\end{algorithmic}
\end{algorithm}

A Byzantine sender may send fragments for different root hashes to the honest nodes, causing wasteful transmissions of fragments. However, algorithm $\mathcal{A}_{sig}$ ensures that the number of transmissions is bounded as the next lemma shows.

\begin{lemma}
\label{lemma:sig_two_hashes}
Every honest node sends fragments for at most two different root hashes.
\end{lemma}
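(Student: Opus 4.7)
The plan is to enumerate every location in Algorithms~\ref{algo:algorithm_sig_triggered_actions} and \ref{algo:algorithm_sig_state_actions} where an honest node $v_i$ can transmit a fragment message, and to show that across all such transmissions the set of root hashes involved has cardinality at most two. There are exactly three such locations: (i) in the triggered action, when $v_i$ receives its first valid fragment directly from the $\sender$ and broadcasts $fragment(h, f_i, i, \pi_i)$ together with its proposal; (ii) in the state-based action that broadcasts $fragment(h^*, f_i, i, \pi_i)$ once $h^* \ne \bot$ and $\FragSet(h^*, i) \ne \bot$; and (iii) in the delivery block inherited from Algorithm~\ref{algo:algorithm_bit_state_actions}, where $v_i$ sends fragments $f_j$ (for $j \ne i$ recomputed from the recovered message) to each $v_j \in V \setminus \ReceiveSet(h^*)$.

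First I would argue that location (i) fires at most once and therefore involves at most one root hash, call it $h_1$: the guard ``first fragment from $v_k = \sender$'' restricts this to the first valid fragment that $v_i$ accepts from the $\sender$. Next I would observe that both locations (ii) and (iii) only transmit fragments for the root hash $h^*$, since the message reconstructed in the delivery step is precisely the one whose fragments have root hash $h^*$ (the reconstructed hash $h$ must equal $h^*$, otherwise the delivery branch is not entered). The key step is then to rule out that $h^*$ takes more than one value at $v_i$ over time: in the state-based action $h^*$ is only assigned when $h^* = \bot$, while in the triggered action $h^*$ may be overwritten, but only upon receipt of a valid signature for some hash~$h$, which by Lemma~\ref{lemma:sig_same_h_id} must coincide with the unique hash for which $2t+1$ signature shares exist. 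Hence $h^*$ is set to the same value whenever it is (re)assigned at $v_i$.

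Combining these observations, the set of root hashes for which $v_i$ ever broadcasts a fragment is contained in $\{h_1, h^*\}$, of size at most two. The main obstacle is the second step: verifying that $h^*$ cannot be assigned two different values over the course of the execution. This requires a careful appeal to Lemma~\ref{lemma:sig_same_h_id}, which bounds the number of hashes that can ever accumulate $2t+1$ signature shares, together with the observation that valid signatures can only be forged for such hashes (except with negligible probability by the cryptographic assumptions in \S\ref{sec:model}). Once this is established, the remainder of the proof is bookkeeping over the three transmission sites above.
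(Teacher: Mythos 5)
Your proof is correct and follows essentially the same structure as the paper's, namely: the first fragment broadcast is for the hash received from the sender, and every subsequent fragment transmission is for $h^*$, which can only ever take one value. You are in fact more careful than the paper on the one subtle point: the paper asserts ``$h^*$ never changes once set,'' but as you correctly note, the triggered action in Algorithm~\ref{algo:algorithm_sig_triggered_actions} overwrites $h^*$ unconditionally whenever a proposal with a valid signature arrives, so the real reason $h^*$ remains stable is that (by the uniqueness argument underlying Lemma~\ref{lemma:sig_same_h_id} and signature unforgeability) a valid signature can only exist for a single hash. That extra detail is a genuine improvement in rigor; the rest of the bookkeeping over the three transmission sites matches the paper's argument.
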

\begin{proof}
An honest node $v_i$ broadcasts its fragment $f_i$ for some root hash $h$ when receiving $f_i$ from the $\sender$. Apart from this transmission, according to Algorithm~\ref{algo:algorithm_sig_triggered_actions}, $v_i$ only sends fragments associated with $h^*$. Since $h^*$ never changes once set, the claim follows.
\end{proof}

Since algorithm $\mathcal{A}_{sig}$ enforces similar restrictions for the acceptance of messages as algorithm $\mathcal{A}_{bit}$, Lemma~\ref{lemma:accept_all} holds for $\mathcal{A}_{sig}$ as well.

\begin{lemma}
\label{lemma:sig_accept_all}
Lemma~\ref{lemma:accept_all} (``If an honest node $v$ sends a proposal or fragment message to an honest node $v'$, then $v'$ accepts and processes the received message.'') holds for $\mathcal{A}_{sig}$.
\end{lemma}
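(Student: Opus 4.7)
The plan is to mirror the structure of the proof of Lemma~\ref{lemma:accept_all}, but reframe it around the message types of $\mathcal{A}_{sig}$. The acceptance rules in Algorithm~\ref{algo:algorithm_sig_triggered_actions} fall into three categories: (i) a $fragment$ message from $v_k$ is accepted iff $j\in\{i,k\}$, the Merkle proof is valid, and $|\HashSet(v_k)|<2$ or $h\in\HashSet(v_k)$; (ii) a $proposal(h,\sigma)$ where $\sigma$ is a full signature is unconditionally accepted (the check \texttt{valid\_signature} succeeds by cryptographic assumption); (iii) a $proposal(h,\sigma)$ where $\sigma$ is a share is accepted iff \texttt{valid\_share} holds and $|\HashSet(v_k)|<2$ or $h\in\HashSet(v_k)$. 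Since honest nodes always generate valid Merkle proofs, valid signature shares, and valid signatures, the cryptographic checks are automatic; the whole argument reduces to showing that the $\HashSet(v_k)$ bookkeeping check at $v'$ never rejects a message from an honest $v$.

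First I would enumerate precisely what an honest node $v=v_k$ can ever emit that is relevant to $v'$'s $\HashSet(v_k)$. By Lemma~\ref{lemma:sig_two_hashes}, $v$ sends fragments for at most two root hashes, namely the root $h$ of the first fragment the sender gave it and, possibly, $h^*$. Inspection of Algorithm~\ref{algo:algorithm_sig_triggered_actions} shows that the only signature-share proposal that $v$ ever broadcasts is issued in the same block that also broadcasts $v$'s own fragment for $h$; hence every signature-share proposal of $v$ carries the hash $h$. The only other proposal $v$ can send is the full-signature $proposal(h^*,\sigma^*)$ from Algorithm~\ref{algo:algorithm_sig_state_actions}, but at the receiver this triggers branch (ii) above, which bypasses $\HashSet$ entirely. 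Therefore, across all messages of $v$ that can possibly affect $\HashSet(v_k)$ at $v'$, the set of root hashes appearing is a subset of $\{h, h^*\}$, of cardinality at most two.

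Next I would argue acceptance inductively over the delivery order at $v'$. As $v'$ processes fragments and share-proposals from $v$, it only ever inserts elements of $\{h,h^*\}$ into $\HashSet(v_k)$, so $|\HashSet(v_k)|\le 2$ is invariant. When a new message arrives with root hash $h'\in\{h,h^*\}$, either $|\HashSet(v_k)|<2$ (and the check passes) or $\HashSet(v_k)\subseteq\{h,h^*\}$ already has size two, in which case $h'$ is already contained in it and the disjunct $h'\in\HashSet(v_k)$ passes. Either way the check succeeds, and combined with the cryptographic validity of honest fragments, shares, and signatures, $v'$ accepts and processes the message.

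The only subtle point — and the one I would spend the most care on — is making the $h$–vs–$h^*$ bookkeeping watertight in the case where an honest $v$ eventually contributes the same share-proposal hash and later broadcasts a fragment for a different $h^*$. Here one must note that Algorithm~\ref{algo:algorithm_sig_triggered_actions} issues the share-proposal exclusively alongside the sender-provided fragment and for the exact hash of that fragment, so the pair $\{h,h^*\}$ above genuinely upper-bounds the hashes that enter $\HashSet(v_k)$. Once this is pinned down, the rest of the argument is a direct translation of the proof of Lemma~\ref{lemma:accept_all}.
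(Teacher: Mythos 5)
Your proof is correct and follows essentially the same route as the paper: both hinge on the observation that an honest node emits messages for at most two root hashes — the hash $h$ of the first fragment received from the sender, and the fixed $h^*$ — so the two-hash acceptance budget in $\HashSet(v_k)$ at the receiver is never exhausted. You spell out a few more details than the paper (the case split over the three acceptance branches, the fact that a full-signature proposal bypasses the $\HashSet$ check, and the explicit invariant $\HashSet(v_k)\subseteq\{h,h^*\}$), but none of these change the underlying argument.
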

\begin{proof}
Algorithm $\mathcal{A}_{sig}$ and $\mathcal{A}_{bit}$ share the property that messages for two different hashes are accepted from any node.
According to Algorithm~\ref{algo:algorithm_sig_triggered_actions}, an honest node broadcasts a signature and fragment message for the same hash $h$ when receiving its fragment from the $\sender$. Algorithm~\ref{algo:algorithm_sig_state_actions} states that any further transmission of a signature or fragment message must be for hash $h^* \ne \bot$. Since $h^*$ never changes when set, the claim follows.
\end{proof}

The following variant of Lemma~\ref{lemma:honest_nodes_get_fragments} holds for algorithm $\mathcal{A}_{sig}$.

\begin{lemma}
\label{lemma:sig_honest_nodes_get_fragments}
If an honest node delivers $m$ with root hash $h$, all honest nodes will eventually store at least $2t+1$ fragments of $m$ and set $h^*$ to $h$.
\end{lemma}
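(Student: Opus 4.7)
My plan is to trace how the delivery event at one honest node forces every other honest node to both set $h^*$ to the same hash and eventually accumulate $2t+1$ fragments. Let $v$ be the honest node that delivers $m$ with root hash $h$. The last conditional in Algorithm~\ref{algo:algorithm_sig_state_actions} requires $h^* = h$ at $v$ together with a valid signature $\sigma^*$ on $h$ and $|\FragSet(h)| \ge 2t+1$; by Lemma~\ref{lemma:sig_same_h_id}, $h$ is also the only hash any honest node could ever assign to $h^*$.

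First I would observe that immediately before delivering, $v$ broadcasts $proposal(h, \sigma^*)$ carrying the aggregated signature. By the triggered proposal handler in Algorithm~\ref{algo:algorithm_sig_triggered_actions}, every honest recipient sets $h^* \coloneqq h$ upon receiving this message, so eventually $h^* = h$ holds at all honest nodes.

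Next I would show that every honest node eventually has its own fragment for $h$ stored. The delivery loop (reused from Algorithm~\ref{algo:algorithm_bit_state_actions}) sends $fragment(h, j, f_j, \pi_j)$ to each $v_j \in V \setminus \ReceiveSet(h)$, which by Lemma~\ref{lemma:sig_accept_all} each honest recipient accepts. For an honest $v_j \in \ReceiveSet(h)$, the fragment acceptance rule implies $v_j$ sent $v$ either its own fragment $f_j$ or $v$'s fragment $f_v$; in the former case $v_j$ clearly already holds $f_j$, and in the latter case $v_j$ must previously have held the entire list of fragments (either as the $\sender$ that generated them, or as a node that had itself reached the recovery and delivery step), so $v_j$ also holds $f_j$. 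With both $h^* = h$ set and $\FragSet(h, v_j) \ne \bot$ at every honest $v_j$, the state-based broadcast rule in Algorithm~\ref{algo:algorithm_sig_state_actions} fires and $v_j$ broadcasts $fragment(h, f_j, j, \pi_j)$.

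Finally, since there are at least $2t+1$ honest nodes and each broadcasts its fragment, Lemma~\ref{lemma:sig_accept_all} guarantees every honest node eventually collects $|\FragSet(h)| \ge 2t+1$. The main obstacle I expect is the middle step: carefully justifying that an honest $v_j \in \ReceiveSet(h)$ already possesses its own fragment, because the acceptance rule permits a fragment message to carry the recipient's index rather than the sender's. I would have to enumerate the honest transmission patterns across Algorithms~\ref{algo:algorithm_sig_triggered_actions} and \ref{algo:algorithm_sig_state_actions} that can produce such a cross-indexed fragment, and argue that each one implies the sender has already computed or recovered the full fragment list, hence its own index in particular.
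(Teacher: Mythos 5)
Your proof is correct and follows essentially the same route as the paper's: the delivered node broadcasts the aggregated signature before delivering, which causes all honest nodes to eventually set $h^* \coloneqq h$; you then split on whether an honest $v_j$ is in $\ReceiveSet(h)$ at $v$ to argue that every honest node eventually holds its own fragment, after which the state-based broadcast rule gives everyone $2t+1$ fragments. The paper's proof of this lemma is terser (it just refers back to the argument for $\mathcal{A}_{bit}$), and the cross-indexed case you flag as the obstacle is handled there by the one-line assertion ``the latter case also implies that $v'$ must have its own fragment''; your explicit enumeration of how an honest node can come to send a cross-indexed fragment (only as the $\sender$ or after reaching the recovery/delivery step) is the justification the paper leaves implicit, so your write-up is if anything slightly more careful.
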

\begin{proof}
The same argument as for algorithm $\mathcal{A}_{bit}$ applies, which proves that every honest node gets its fragment eventually after an honest node $v$ has delivered a message $m$ for some hash $h$, which it accepts due to Lemma~\ref{lemma:sig_accept_all}.
Since $v$ must have computed and broadcast a valid signature $\sigma$, it eventually holds at all honest nodes that $h^* = h$ (and $\sigma^* = \sigma$). According to Lemma~\ref{lemma:sig_same_h_id}, honest nodes all set $h^*$ to the same value. Consequently, every honest node will eventually broadcast its fragment and thus also receive at least $2t+1$ fragments of $m$.
\end{proof}

As in \S\ref{sec:algorithm}, the first main result is that $\mathcal{A}_{sig}$ is a correct reliable broadcast algorithm.

\begin{theorem}
Algorithm $\mathcal{A}_{sig}$ implements reliable broadcast in the asynchronous communication model with $t < n/3$ Byzantine nodes.
\end{theorem}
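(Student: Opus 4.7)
The plan is to mirror the proof structure used for $\mathcal{A}_{bit}$ and dispatch the four properties of Definition~\ref{def:reliable_broadcast} in turn, leaning on the lemmas already established in this subsection and on the fact that the delivery block of $\mathcal{A}_{sig}$ is inherited verbatim from Algorithm~\ref{algo:algorithm_bit_state_actions}.

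For \textbf{validity}, I would start from an honest sender broadcasting $m$ with root hash $h$. Every honest $v_i$ receives its fragment from the sender and then broadcasts both $proposal(h, \sigma_i)$ and its own fragment message. By Lemma~\ref{lemma:sig_accept_all}, each honest recipient accepts these, so every honest node eventually collects at least $2t+1$ valid signature shares on $h$, assembles $\sigma^*$, sets $h^* = h$ via the first state-based rule, and broadcasts its own fragment. Once $|\FragSet(h)| \ge 2t+1$ holds at every honest node, the final state-based rule fires and delivery occurs.

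For \textbf{agreement} and \textbf{integrity}, I would invoke Lemma~\ref{lemma:sig_same_h_id}: delivery requires $h^* \ne \bot$, and any two honest nodes that set $h^*$ must set it to the same hash. Since $h^*$ is assigned once and never overwritten (whether through a valid signature in the triggered action or through $2t+1$ gathered shares in the state-based rule), and since the inherited delivery block sets $done$ to $true$, each honest node delivers at most one message, and any two delivered messages share the same root hash. Collision resistance of the hash function, together with the recomputation check $h = h_{max}$ carried over from Algorithm~\ref{algo:algorithm_bit_state_actions}, then forces the delivered messages to coincide. For \textbf{totality}, the plan is to apply Lemma~\ref{lemma:sig_honest_nodes_get_fragments} directly: if an honest node delivers $m$ with root hash $h$, every honest node eventually has $h^* = h$ and $|\FragSet(h)| \ge 2t+1$, which is exactly what the final state-based rule needs to produce a successful recovery and delivery.

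The main obstacle I anticipate is integrity, where, unlike in $\mathcal{A}_{bit}$ (for which Lemma~\ref{lemma:one_fragment_hash} immediately forces a unique deliverable hash), uniqueness now relies on two combined facts: that $h^*$ is set exactly once per node, and that Lemma~\ref{lemma:sig_same_h_id} pins down its value globally. This in turn rests on the rule that each honest node produces a signature share for at most one root hash, namely the first one it receives from the sender. Verifying that the pseudocode genuinely enforces this across both the triggered and state-based actions, and that no alternative path allows a second threshold signing, is the step where I would spend the most care; the remaining arguments reduce to straightforward invocations of the lemmas already proved.
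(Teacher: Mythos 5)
Your proposal follows the paper's proof almost verbatim: validity via every honest node broadcasting its share and fragment (so all assemble $\sigma^*$, set $h^* = h$, and hit the delivery condition), agreement and integrity via Lemma~\ref{lemma:sig_same_h_id}, and totality via a direct application of Lemma~\ref{lemma:sig_honest_nodes_get_fragments}. The decomposition into the four properties and the specific lemmas invoked are the same; where you go beyond the paper is that you spell out two things the paper leaves implicit in its one-line integrity/agreement arguments, namely that the $done$ flag is what blocks a second delivery at a single node, and that collision resistance of the hash together with the recomputation check $h = h_{max}$ is what upgrades ``same root hash'' to ``same message.'' Both additions are correct and arguably tighten the exposition. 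One small imprecision: $h^*$ is not literally ``assigned once and never overwritten'' in the triggered action handling $proposal(h, \sigma)$, since a valid signature may arrive after $h^*$ is already set; but Lemma~\ref{lemma:sig_same_h_id} guarantees any such overwrite is with the same value, so the conclusion you draw is unaffected. The worry you flag about a hidden second threshold-signing path is already discharged in the proof of Lemma~\ref{lemma:sig_same_h_id}, so no further work is needed there.
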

\begin{proof}
The four conditions of reliable broadcast are again proved separately.

\textbf{Validity}: If the sender $v$ of a message $m$ with root hash $h$ is honest, it sends $f_j$ to $v_j$ for all $v_j \in V$. Every honest node $v_j$ broadcasts its fragment and signature share and, consequently, receives at least $2t+1$ signature shares and fragments eventually. It follows that every honest node eventually computes $\sigma^*$, sets $h^* \coloneqq h$, and proceeds to deliver $m$.

\textbf{Agreement}: If two honest nodes $v$ and $v'$ deliver different messages $m$ and $m'$, they must have set $h^*$ to different hashes, a contradiction to Lemma~\ref{lemma:sig_same_h_id}.

\textbf{Integrity}: Due to Lemma~\ref{lemma:sig_same_h_id}, every honest node delivers at most one message.

\textbf{Totality}: If an honest node $v$ delivers a message $m$ with root hash $h$, it eventually holds at all honest nodes that $h^* = h$ and $|\FragSet(h^*)| \ge 2t+1$ due to Lemma~\ref{lemma:sig_honest_nodes_get_fragments}. Since $v$ managed to verify the correctness of the fragments and the root hash, it holds that the verification must succeed at other nodes as well, which entails that every honest node delivers $m$.
\end{proof}

Compared to $\mathcal{A}_{bit}$, the communication complexity of algorithm $\mathcal{A}_{sig}$ is worse, which appears to be inevitable given that the nodes must broadcast fragments without delay to achieve an optimal time complexity.
However, the following lemma states that even a Byzantine sender cannot induce many (wasteful) transmissions of fragments.

\begin{lemma}
\label{lemma:sig_at_most_t}
At most $t$ honest nodes send fragments for $2$ different root hashes.
\end{lemma}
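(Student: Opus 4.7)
The plan is to bound the number of honest nodes that broadcast fragments for two distinct hashes by observing that an honest node $v_i$ broadcasts $f_i$ in exactly two places in the description: in Algorithm~\ref{algo:algorithm_sig_triggered_actions}, upon accepting the first valid fragment $f_i$ from the $\sender$ (for some hash $h$), and in Algorithm~\ref{algo:algorithm_sig_state_actions}, for the hash $h^*$ once $h^*$ is set. Hence the only way $v_i$ can emit fragments for two different hashes is if its first fragment from the $\sender$ was for some $h \ne h^*$.

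Next I would handle the trivial case separately: if $h^*$ is never set at any honest node then no honest node ever executes the state-based fragment broadcast and every honest node sends fragments for at most one hash through the triggered action, so the bound $t$ holds vacuously. Otherwise, by Lemma~\ref{lemma:sig_same_h_id}, there is a unique hash $h$ such that every honest node that sets $h^*$ sets it to $h$, and it suffices to show that at least $t+1$ honest nodes received their first $\sender$-fragment for exactly this $h$.

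The central step is a counting argument on signature shares. For any honest node to have set $h^* = h$, either $\SigSet(h)$ grew to size $2t+1$ locally, or a valid aggregated signature $\sigma^*$ on $h$ was received; in both cases, by unforgeability of threshold signatures, at least $2t+1$ nodes must have produced a valid share on $h$. Since at most $t$ of those can be Byzantine, at least $t+1$ honest nodes produced a share on $h$. Inspecting Algorithm~\ref{algo:algorithm_sig_triggered_actions} shows that an honest node only ever produces a share (via \texttt{threshold\_sign}) inside the branch that fires on the first fragment accepted from the $\sender$, so those $t+1$ honest nodes had $h$ as their first $\sender$-hash and, by the opening observation, broadcast fragments for $h$ only. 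Consequently at most $(2t+1)-(t+1)=t$ honest nodes can have first received a fragment for a hash different from $h$, which is exactly the bound claimed.

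The main subtlety I expect is the bookkeeping around how $h^*$ gets set: one has to verify that setting $h^*$ via a received signature $\sigma^*$ is covered by the same $t+1$-honest-signers argument as setting it via locally accumulated shares, and to appeal to the cryptographic assumption that shares cannot be spoofed so that Byzantine nodes cannot contribute more than $t$ of the $2t+1$ supporting shares. Once these two points are pinned down, the case split and counting combine directly into the stated bound.
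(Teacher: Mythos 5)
You take the same route as the paper: show that every honest node that contributed a signature share on the common hash $h$ (equivalently, received its first $\sender$-fragment for $h$) only ever sends fragments for $h$, and then bound the remaining honest nodes by a counting argument on the $2t+1$ shares underlying $\sigma^*$. The opening reduction and the case split are both fine. The gap is in the final count. You deduce that at least $t+1$ honest nodes signed $h$ and then conclude that at most $(2t+1)-(t+1)=t$ honest nodes could have started with a different hash; this subtraction implicitly treats $2t+1$ as the number of honest nodes, which only holds when exactly $t$ nodes are Byzantine. If only $t'<t$ nodes are Byzantine there are $3t+1-t'>2t+1$ honest nodes, and your lower bound of $t+1$ honest signers is too loose: it would leave up to $(3t+1-t')-(t+1)=2t-t'$ potential double-senders, which exceeds $t$. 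The fix (which the paper makes explicit) is to parameterize by $t'$: Byzantine nodes supply at most $t'$ of the $2t+1$ shares, so at least $2t+1-t'$ honest nodes signed $h$, leaving at most $(3t+1-t')-(2t+1-t')=t$ honest non-signers. Equivalently and more cleanly: at least $2t+1$ distinct nodes must produce shares, so at most $n-(2t+1)=t$ nodes of any kind fail to do so, and the honest double-senders are a subset of these non-signers.
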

\begin{proof}
Let $S$ denote the set of honest nodes that send fragments for more than one root hash. Algorithm~\ref{algo:algorithm_sig_triggered_actions} dictates that honest nodes only send another fragment when $h^* \ne \bot$. If there are $t' \le t$ Byzantine nodes in the execution, at least $2t+1-t'$ honest nodes must receive their fragments for hash $h^*$ in order to obtain a signature $\sigma^*$ for this hash. Since none of these nodes sends fragments for other root hashes, we get that $|S| \le 3t+1 - (2t + 1 - t') - t' = t$.
\end{proof}

As the following theorem shows, $\mathcal{A}_{sig}$ achieves an overhead factor of $5/2$ for $|m|\rightarrow \infty$.

\begin{theorem}[Communication complexity]
\label{theorem:sig_communication_complexity}
It holds for algorithm $\mathcal{A}_{sig}$ that $\mathcal{L}(n)  = \frac{5}{2}$.
\end{theorem}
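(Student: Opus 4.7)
The plan is to mirror the accounting in the proof of Theorem~\ref{theorem:bit_communication_complexity}, with one new ingredient: Lemma~\ref{lemma:sig_at_most_t} to bound the number of honest nodes that broadcast their fragment a second time. Setting $|f| \coloneqq \frac{3}{2}\ell/n + O(\kappa\log(n))$ for the size of the largest fragment message sent by an honest node, I would identify four kinds of honest bits. First, if the sender is honest, its triggered action costs at most $(n-1)|f|$ bits. Second, each honest node broadcasts its own fragment once in the triggered action (on receiving its first valid fragment from the sender), contributing $(n-1)|f|$ per broadcaster. Third, by Lemma~\ref{lemma:sig_at_most_t}, at most $t$ honest nodes broadcast their own fragment a second time in the state action (when $h^*$ differs from the first hash they accepted), adding up to $t(n-1)|f|$ overall. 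Fourth, after the delivery condition fires, each honest node sends a directed fragment message to every peer in $V\setminus\ReceiveSet(h^*)$.

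The key new estimate I need is $|V\setminus\ReceiveSet(h^*)|\le t$. I would argue as follows: the accept-filter in Algorithm~\ref{algo:algorithm_sig_triggered_actions} (``$j=i$ or $j=k$'') forces every non-self entry $\FragSet(h^*,v_j)$ with $v_j\ne v_i$ to be populated directly by $v_j$ itself; hence $|\FragSet(h^*)|\ge 2t+1$ guarantees at least $2t$ distinct peers in $\ReceiveSet(h^*)$. In the case $\FragSet(h^*,v_i)=\bot$, all $2t+1$ contributors are distinct from $v_i$, so $|\ReceiveSet(h^*)|\ge 2t+1$ already; in the complementary case, $v_i$'s own state-action broadcast of its fragment for $h^*$ sends to itself, placing $v_i$ in $\ReceiveSet(h^*)$ and again giving $|\ReceiveSet(h^*)|\ge 2t+1$. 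I expect this case split---carefully reading the accept-filter and accounting for the self-send---to be the subtlest step.

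Loosely upper-bounding the number of honest nodes by $n$ and adding an $O(\kappa n^2)$ remainder for proposal traffic carrying shares and signatures, the total satisfies
\[
\mathcal{C}(n,\ell)\le (n-1)|f|+n(n-1)|f|+t(n-1)|f|+nt|f|+O(\kappa n^2)=[(n-1)(n+t+1)+nt]|f|+O(\kappa n^2).
\]
Substituting $n=3t+1$ and $|f|=\frac{3\ell}{2n}+O(\kappa\log n)$, the bracket evaluates to $\frac{5}{3}n^2+O(n)$, so $\mathcal{C}(n,\ell)\le \frac{5}{2}n\ell+O(\kappa n^2\log n)$. Dividing by $n\ell$ and letting $\ell\to\infty$ yields $\mathcal{L}(n)=\frac{5}{2}$. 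The upshot compared to $\mathcal{A}_{bit}$ is that the $t$ extra fragment broadcasts permitted by Lemma~\ref{lemma:sig_at_most_t} account for precisely the $1/2$ gap between overhead $2$ and overhead $5/2$.
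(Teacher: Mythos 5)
Your proof takes essentially the same route as the paper's: the sender contributes $(n-1)|f|$ bits, Lemma~\ref{lemma:sig_at_most_t} caps honest fragment broadcasts at roughly $(4t+1)(n-1)|f|$ (one broadcast per honest node plus $t$ second broadcasts), the post-delivery directed sends contribute about $tn|f|$, and proposal/share traffic is $O(\kappa n^2)$; after substituting $|f|=\tfrac{3}{2}\ell/n+O(\kappa\log n)$ and $t=(n-1)/3$ both computations yield $\mathcal{L}(n)=\tfrac{5}{2}$. The one place I would push back is the step you singled out as the subtlest: the claim that ``$v_i$'s own state-action broadcast of its fragment for $h^*$ sends to itself, placing $v_i$ in $\ReceiveSet(h^*)$'' is not safe in the asynchronous model, since the self-message need not have been processed by the time the delivery condition fires, and in any case the algorithm updates $\ReceiveSet$ inside the triggered handler, not at send time. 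Fortunately you do not need it: the $j=k$ accept filter guarantees that every non-self entry of $\FragSet(h^*,\cdot)$ puts the corresponding $v_j$ into $\ReceiveSet(h^*)$, giving $|\ReceiveSet(h^*)|\ge 2t$ and hence $|V\setminus\ReceiveSet(h^*)|\le t+1$; and whenever that set has size $t+1$, $v_i$ itself must be one of the missing nodes, so at most $t$ of the directed fragment messages actually leave $v_i$. (Even the looser $t+1$ bound would still give $\mathcal{L}(n)=\tfrac{5}{2}$, so nothing downstream is at risk.) Note that the paper asserts $|V\setminus\ReceiveSet(h)|\le t$ without proof, carrying it over ``as before'' from $\mathcal{A}_{bit}$, so your instinct to spell it out is sound; just replace the self-send argument with the self-exclusion one.
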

\begin{proof}
The dissemination of fragments by the sender requires at most $(n-1)|f|$ bits, where $|f|$ again denotes the size of the largest fragment message sent by an honest node.
Due to Lemma~\ref{lemma:sig_at_most_t}, at most $t$ honest nodes broadcast fragments twice, and the other honest nodes broadcast at most one fragment. Additionally, each node sends $O(\kappa n)$ bits for the signed proposal messages and $f_j$ to all $v_j \in V \setminus \ReceiveSet(h)$, where $|V \setminus \ReceiveSet(h)| \le t < n/3$ as before. Hence it follows that $\mathcal{C}(n, |m|)$ is at most
\begin{eqnarray*}
& (n-1)|f| + (2t+1)(n-1)|f| + 2t(n-1)|f| + t(n-1)|f| + O(\kappa n^2)\\
&< (n-1)|f| + \frac{5}{3}n(n-1)|f| + O(\kappa n^2)
< \frac{5}{2}n|m| + O(\kappa n^2\log(n)),
\end{eqnarray*}
where the first inequality holds because $(5t+1) < \frac{5}{3}(3t+1) = \frac{5}{3}n$. Thus, $\mathcal{L}(n) = \frac{5}{2}$.
\end{proof}

It is worth noting that the communication complexity is only worse compared to $\mathcal{A}_{bit}$ if the sender deliberately transmits conflicting fragments, i.e.,
$\mathcal{L}(n) = 2$ still holds otherwise.
The main advantage of $\mathcal{A}_{sig}$ over $\mathcal{A}_{bit}$ is its superior time complexity.

\begin{theorem}[Time complexity]
\label{theorem:sig_time_complexity}
If the sender is honest, $\mathcal{A}_{sig}$ has a time complexity of $2$.
\end{theorem}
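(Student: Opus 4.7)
The plan is to trace the execution from the honest sender's invocation and show that two time units suffice for every honest node to reach the delivery step. The key structural feature of $\mathcal{A}_{sig}$ to exploit is that, unlike $\mathcal{A}_{bit}$, a node that receives its fragment from the $\sender$ emits the signed proposal and its own fragment \emph{in the same broadcast}, collapsing what were two rounds in $\mathcal{A}_{bit}$ into one.

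First, I would argue that by time $1$ every honest node $v_j$ has received its fragment $f_j$ together with a valid Merkle proof $\pi_j$ for the unique root hash $h$ computed by the $\sender$. Since $\HashSet(\sender)$ at $v_j$ is still empty at that point, the acceptance checks in Algorithm~\ref{algo:algorithm_sig_triggered_actions} pass, so $v_j$ computes its signature share $\sigma_j$ on $h$ and broadcasts both $proposal(h, \sigma_j)$ and $fragment(h, f_j, j, \pi_j)$.

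Second, by time $2$ the proposals and fragments broadcast by all $2t+1$ honest nodes have arrived at every honest node. Lemma~\ref{lemma:sig_accept_all} guarantees that these messages are accepted, so at each honest node we have $|\SigSet(h)| \ge 2t+1$ and $|\FragSet(h)| \ge 2t+1$. Because honest nodes threshold-sign only the hash they first receive from the $\sender$ (which here is $h$), no other hash can accumulate more than $t$ shares, so $h_{max} = h$. The first state-based action of Algorithm~\ref{algo:algorithm_sig_state_actions} then aggregates the shares into $\sigma^*$ and sets $h^* := h$, which immediately enables the third state-based action whose preconditions $h^* \ne \bot$ and $|\FragSet(h^*)| \ge 2t+1$ hold simultaneously, triggering the delivery.

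The subtlety to address — and the only real obstacle — is ensuring that no additional time unit is hidden between computing $\sigma^*$/setting $h^*$ and invoking the delivery. I would make this explicit by noting that both steps are state-based actions whose guards become true at the same instant, and the paper's separation of triggered and state-based actions allows them to be executed back-to-back without any intervening message exchange. Consequently, at most $2$ time units elapse between the $\sender$'s invocation of \texttt{reliable\_broadcast} and the delivery of $m$ at every honest node.
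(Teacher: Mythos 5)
Your argument is correct and follows essentially the same two-step trace as the paper's proof: fragments arrive and are echoed together with signature shares by time $1$, and by time $2$ the $2t+1$ shares and fragments let each honest node assemble $\sigma^*$, set $h^*$, and deliver. Your extra care about the acceptance checks, $h_{max}=h$, and the back-to-back firing of the state-based actions is a more explicit rendering of the same reasoning the paper leaves implicit.
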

\begin{proof}
If the sender is honest, all honest nodes receive their fragments within $1$ time unit and broadcast their signature shares and fragments. After $2$ time units, all honest nodes must have received at least $2t+1$ signature shares and fragments for the same root hash. Thus, all honest nodes compute the signature $\sigma^*$ and set $h^*$, triggering the delivery of $m$.
\end{proof}

If an honest node $v$ delivers $m$ despite a Byzantine sender, it is straightforward to show that each honest node delivers $m$ at most $2$ time units later.
While the time complexity of $\mathcal{A}_{sig}$ is lower, it has a slightly higher space complexity.

\begin{theorem}[Space complexity]
Algorithm $\mathcal{A}_{sig}$ has a space complexity of $\frac{5}{2}\ell_{max} + O(n\kappa)$.
\end{theorem}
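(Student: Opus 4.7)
The plan is to follow the template of Theorem~\ref{theorem:bit_space_complexity} but sharpen the fragment count with Lemma~\ref{lemma:sig_at_most_t}. The bookkeeping structures $\HashSet$, $\ReceiveSet$, and $\SigSet$ hold at most two entries per sender (capped by the acceptance rule $|\HashSet(v_k)| < 2$), each of size $O(\kappa)$, while the scalars $h^*$, $\sigma^*$, and $done$ add $O(\kappa)$ more; the non-fragment footprint is therefore $O(n\kappa)$, and the task reduces to bounding $\FragSet$.

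To count the fragments stored in $\FragSet$, I would partition the entries by the node $v_j$ whose fragment is stored. Any Byzantine $v_j$ contributes at most $2$ entries because the same acceptance rule forbids more than two hashes per sender. For an honest $v_j$, Lemma~\ref{lemma:sig_two_hashes} alone would also give $2$, but Lemma~\ref{lemma:sig_at_most_t} is strictly stronger: at most $t$ honest nodes ever broadcast fragments for two distinct root hashes, so the remaining $t+1$ honest nodes contribute at most $1$ entry each. Summing across the $t$ Byzantine and $2t+1$ honest nodes yields at most
\begin{displaymath}
t\cdot 2 \;+\; t\cdot 2 \;+\; (t+1)\cdot 1 \;=\; 5t + 1
\end{displaymath}
fragments of size $\frac{3\ell_{max}}{2(3t+1)}$ each. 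Since $3(5t+1) = 15t+3 < 15t+5 = 5(3t+1)$, the total fragment storage is strictly less than $\frac{5}{2}\ell_{max}$, and combining with the metadata yields the claimed $\frac{5}{2}\ell_{max} + O(n\kappa)$.

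The main obstacle is recognizing that Lemma~\ref{lemma:sig_two_hashes} on its own is too weak: it would leave every honest node free to contribute two fragments and inflate the bound to roughly $3\ell_{max}$. The gain comes from Lemma~\ref{lemma:sig_at_most_t}, whose proof relies on the fact that $h^*$ never changes once set, so the at least $t+1$ honest signers of $\sigma^*$ are pinned to a single root hash and cannot contribute a second fragment.
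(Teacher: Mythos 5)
Your proof is correct and follows essentially the same approach as the paper: both invoke Lemma~\ref{lemma:sig_at_most_t} to cap the number of honest two-hash senders at $t$, arrive at the identical count of $5t+1 < \tfrac{5}{3}n$ fragments of size $\tfrac{3}{2}\ell_{max}/n$, and account for the remaining data structures as $O(n\kappa)$. The only difference is cosmetic — you split the $2t$ two-entry contributors explicitly into $t$ Byzantine plus $t$ honest and add a remark on why Lemma~\ref{lemma:sig_two_hashes} alone would be too weak, which the paper leaves implicit.
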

\begin{proof}
According to Lemma~\ref{lemma:sig_at_most_t}, there are at most $t$ honest nodes that send their fragments for $2$ root hashes, i.e., there are at least $t+1$ honest nodes that only send their fragment once. Thus, an honest node may store $2t \cdot 2 + (t+1) \cdot 1 \le \frac{5}{3}n$ fragments of size $\frac{3}{2}\ell_{max}/n$ for a total of $\frac{5}{2}\ell_{max}$.
Algorithm $\mathcal{A}_{sig}$ also utilizes $\ReceiveSet$, containing at most $2$ node identifiers of size $O(\kappa)$ per node, and $\SigSet$, storing at most $2$ signature shares of size $O(\kappa)$ per node. Thus, the space complexity of these data structures is upper bounded by $O(n\kappa)$.
\end{proof}

\section{Practical Considerations}
\label{sec:optimizations}

In this section, optimizations that may be relevant for practical applications are discussed.

\noindent\textbf{Partially synchronous communication.}
Byzantine behavior and unpredictable message latencies are often exceptional situations in practice.
While $\mathcal{A}_{sig}$ and $\mathcal{A}_{bit}$ are both tailored to the asynchronous communication model, they can easily be adapted to the \emph{partially synchronous model} where periods of synchrony are assumed.
Given an upper bound $d$ on the message delay that holds for some periods of time, both algorithms can be modified to improve the communication complexity during these times in the absence of faults. Specifically, the additional constraint can be introduced that at least $\delta$ time must have passed since the first fragment was received before delivering $m$.
The following theorem states the effect of this modification on the communication complexity for an algorithm-specific $\delta$.

\begin{theorem}
If communication is synchronous from the start of the execution for $\delta=3d$
($\delta=2d$) time and there are no faults, then $\mathcal{L}(n) = \frac{3}{2}$ for the adapted version of $\mathcal{A}_{bit}$ ($\mathcal{A}_{sig}$).
\end{theorem}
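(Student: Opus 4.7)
The plan is to exploit the fact that, in synchronous fault-free executions, the $\delta$-waiting rule forces each honest node to postpone its delivery until after all fragments broadcast by honest nodes have had time to arrive. Once that is established, the expensive ``send $f_j$ to every $v_j \in V \setminus \ReceiveSet(h)$'' branch in the delivery step executes with an empty iteration set, eliminating exactly the $\Theta(n|m|)$ term that raises $\mathcal{L}(n)$ from $\tfrac{3}{2}$ to $2$ in Theorem~\ref{theorem:bit_communication_complexity} (and analogously for Theorem~\ref{theorem:sig_communication_complexity}).

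First I would do the timing analysis for $\mathcal{A}_{bit}$ with $\delta = 3d$. Fix an honest node $v$ and let $t_1$ be the (local) time at which $v$ receives its first fragment. Because the sender is honest, under synchrony every other honest node also receives its fragment by absolute time $d$, and hence broadcasts $proposal(h)$ by then. Therefore, by time $t_1 + 2d$, every honest node has collected $n$ proposals, in particular at least $2t+1$, and has broadcast its own fragment. One more synchronous message delay bounds the arrival of those $n$ fragments at $v$ by time $t_1 + 3d$. Since the adapted algorithm blocks delivery until at least $t_1 + \delta = t_1 + 3d$, at the moment $v$ executes the state-based delivery rule we have $|\FragSet(h)| = n$ and $\ReceiveSet(h) = V$, so the \texttt{for} loop over $V \setminus \ReceiveSet(h)$ is empty and no additional fragment messages are sent.

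Next I would recompute the communication, following the same accounting as in the proof of Theorem~\ref{theorem:bit_communication_complexity} but dropping the extra $n(n-1)t\,|f|/n$-type term now that the ``catch-up'' sends are absent. This leaves only (i) the sender's $(n-1)|f|$ bits, (ii) each of the $n$ nodes broadcasting exactly one fragment, totalling $n(n-1)|f|$ bits, and (iii) an $O(\kappa n^2)$ overhead for proposals and hashes. Using $|f| = \tfrac{3}{2}|m|/n + O(\kappa \log n)$, the leading term is $\tfrac{3}{2}n|m|$, yielding $\mathcal{L}(n) = \tfrac{3}{2}$.

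For $\mathcal{A}_{sig}$ with $\delta = 2d$ the argument is structurally identical but one round shorter: Algorithm~\ref{algo:algorithm_sig_triggered_actions} has every honest node broadcast both its signature share and its fragment immediately upon receiving its own fragment from the sender, so by time $t_1 + 2d$ node $v$ already has $n$ signature shares and $n$ fragments. The $\delta=2d$ delay then guarantees $\ReceiveSet(h^*) = V$ at delivery, eliminating the Lemma~\ref{lemma:sig_at_most_t} ``second fragment'' contribution as well as the delivery-loop contribution; the remaining leading term is again $n(n-1)|f| \approx \tfrac{3}{2}n|m|$, giving $\mathcal{L}(n) = \tfrac{3}{2}$. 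The only subtlety worth handling carefully is verifying that the $\delta$-window truly captures every message an honest node would ever send in the fault-free synchronous run (in particular that no honest node issues a second fragment transmission before delivery), which is immediate from Algorithms~\ref{algo:algorithm_bit_state_actions} and~\ref{algo:algorithm_sig_state_actions} once the timing bounds above are in place.
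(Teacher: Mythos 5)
Your proof is correct and follows essentially the same approach as the paper's: show that under synchrony and no faults all fragments arrive within the $\delta$-window, so $\ReceiveSet(h)=V$ at delivery, which nulls out the ``catch-up'' sends and leaves the $(n-1)|f| + n(n-1)|f|$ accounting. The only stylistic difference is that the paper simply invokes Theorems~\ref{theorem:bit_time_complexity} and~\ref{theorem:sig_time_complexity} to get the $3d$ (resp.\ $2d$) bound, whereas you re-derive the round-by-round timing explicitly; both are valid.
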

\begin{proof}
As there are no faults by assumption, Theorem~\ref{theorem:bit_time_complexity} and Theorem~\ref{theorem:sig_time_complexity} imply that all nodes receive $n$ fragments after at most $3d$ ($2d$) time for $\mathcal{A}_{bit}$ ($\mathcal{A}_{sig}$).
Thus, $\ReceiveSet(h) = V$ for the root hash $h$ of $m$, which entails that the nodes do not  disseminate any additional fragments before delivering, resulting in a communication complexity of $(n-1)|f| + n(n-1)|f| + O(\kappa n) < n^2|f| + O(\kappa n) \le \frac{3}{2}|m|n + O(\kappa n^2\log(n))$ and thus $\mathcal{L}(n) = \frac{3}{2}$.
\end{proof}

Naturally, this modification does not improve any worst-case bounds because $\mathcal{L}(n) = 2$ still holds even for $t$ crash failures but it may be beneficial in practice nonetheless.

\noindent\textbf{Simplified structure.}
The second optimization only concerns $\mathcal{A}_{sig}$. Instead of handling fragments and proposals separately, signature shares and signatures can be appended to the fragment message. As a result, nodes only send and process messages of a single type.
Fragments and proposals are kept separate in algorithm $\mathcal{A}_{sig}$ to retain as much similarity to $\mathcal{A}_{bit}$ as possible in order to emphasize the key differences and simplify the presentation.
This modification requires the addition of a simple rule: a node appends its signature share as long as $h^* = \bot$ and the signature $\sigma^*$ otherwise. Note that $\sigma^* \ne \bot$ always holds when $h^* \ne \bot$. Signature shares and signatures are still processed as shown in Algorithm~\ref{algo:algorithm_sig_triggered_actions}.

According to Algorithm~\ref{algo:algorithm_sig_state_actions}, the signature is broadcast before executing the message delivery. This step can be omitted entirely when adding the rule above without violating any of the reliable broadcast conditions. Since the signature is broadcast to ensure totality, we briefly argue why this condition continues to hold. Let $v$ be an honest node that has delivered message $m$ with root hash $h$. If all honest nodes broadcast fragments for root hash $h$ at some point, then all honest nodes will eventually deliver $m$ even if they never receive a signature. On the other hand, if there is at least one honest node $v'$ that never sends a fragment for the right hash, then $v' \notin \ReceiveSet(h)$ at node $v$, and thus $v$ must send $v'$'s fragment and the signature to $v'$. Upon receipt of this message, $v'$ will broadcast its fragment and the signature, ensuring that all honest nodes eventually receive the signature, i.e., the signature is broadcast anyway in this case.

\label{sec:lower_bound}

The preceding section showed that an overhead factor of $\frac{3}{2}$ for large messages can  be attained in practice under normal network conditions and in the absence of faults.
In this section, we shed light on the question whether we can hope to further improve upon this bound. The question is answered in the negative, at least for a specific but relevant class of algorithms.

The lower bound holds in the \emph{synchronous communication model} where all nodes
perform some computation, send messages, and receive the sent messages within the same \emph{round} of a bounded and known duration. Furthermore, the lower bounds holds in the
\emph{crash failure model} where a faulty node stops executing at any point during the execution but it never deviates from the correct protocol execution until it fails.
In order to utilize the available bandwidth efficiently, it is beneficial to minimize the maximum bandwidth consumption over all nodes. If each node sends the same amount of data up to a constant factor, the algorithm is called \emph{balanced}~\cite{alhaddad2022}. Any imbalance beyond a constant factor is usually due to the sender transmitting more data, typically in the first round.
We focus on a broader class of ``weakly balanced'' algorithms where the sender can send arbitrarily sized messages to other nodes but at most $o(|m|n)$ bits overall in the first round. Note that each node sends $O(|m| + \kappa n\log(n))$ bits in both $\mathcal{A}_{bit}$ and $\mathcal{A}_{sig}$.

We further restrict our attention to reliable broadcast algorithms that, given an honest sender, guarantee that all honest nodes deliver $m$ after at most $3$ (synchronous) rounds.
This class of algorithms is interesting because it covers $\mathcal{A}_{bit}$, $\mathcal{A}_{sig}$, as well as other algorithms in the literature including the algorithm by Cachin and Tessaro.

Formally, a \emph{$(b,r)$-reliable broadcast algorithm} is defined as a reliable broadcast algorithm that sends messages of size at most $b$ bits in the first round and delivers the message of an honest sender at all honest nodes within at most $r$ rounds.

Let $b_j$ denote the bits (of entropy) that the sender sends to $v_j$ in the first round according to the given algorithm in some execution \emph{without any failures}. As mentioned before, we assume that\begin{equation}\label{eq:first_round}
\sum_{j=1}^n b_j \in o(|m|n).
\end{equation}

Let $b_{ij}$ further denote the bits that $v_i$ sends to $v_j$ in rounds $2,\ldots,r$ in the same execution. Due to the validity condition and the fact that we consider algorithms with a good-case time complexity of at most $r$, all honest nodes must receive $m$ by the end of round $r$.
Let $S_j$ denote the set $\{ v_i \in V \;|\; b_{ij} > 0\}$, i.e., the set of nodes that send some bits to $v_j$. Moreover, let $R \coloneqq \{ v_j \in V \;|\; b_j \ge |m|\}$ denote the set of nodes that receive at least $|m|$ bits in round $1$. Since $\sum_{j=1}^n b_j\in o(|m|n)$, it must hold that $|R| \in o(n)$.
Let $\bar{b}_j \coloneqq \frac{1}{|S_j|}\sum_{i=1}^n b_{ij}$ denote the average number of bits sent to $v_j$ from the nodes in $S_j$ in rounds $2,\ldots, r$.

In the following, we consider the case $r=3$, which permits a simple strategy to derive a lower bound because a node failing to receive expected messages in the second round can only inform other nodes about the failure in the third round. Requiring a fourth round to send the missing bits violates the requirement that all honest nodes deliver the message by round $3$ in case of a non-faulty sender. Therefore, the strategy is simply to mark those nodes as faulty that send many bits in rounds $2$ and $3$.
Given this strategy, it is easy to see that $|S_j| > t$ and $\bar{b}_j > 0$ must hold for any $v_j \in V \setminus R$ as otherwise $v_j$ may not receive $m$ by the end of round $3$.
The following lemma provides a lower bound on $\bar{b}_j$ for the case $|S_j| > t$.

\begin{lemma}
For all $j \in \{1,\ldots, n\}$, $|S_j| > t$, it must hold that
\begin{equation}
\label{eq:average_entropy}
\bar{b}_j \ge (|m| - b_j) / (|S_j| - t).
\end{equation}
\end{lemma}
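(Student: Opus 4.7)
The plan is to fix an index $j$ with $|S_j| > t$ and construct a crash-failure execution that forces $v_j$ to recover $m$ from a restricted pool of bits, then extract the claimed bound via a simple averaging argument. Starting from the failure-free execution that defines the $b_{ij}$, let $C \subseteq S_j$ be the set of $t$ nodes in $S_j$ whose contributions $b_{ij}$ to $v_j$ in rounds~$2$ and~$3$ are the largest. The adversary crashes exactly the nodes in $C$ before round~$2$, so none of the bits they would have routed to $v_j$ are ever delivered.

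Next, I would invoke the $r = 3$ termination requirement. Since the sender is honest and at most $t$ nodes fail, the validity condition forces $v_j$ to deliver $m$ by the end of round~$3$. As argued in the paragraph preceding the lemma, no surviving node can detect the crash and still forward missing material to $v_j$ in time: detection happens only in round~$2$, and any corrective forwarding would need to conclude within round~$3$. Hence the bits available to $v_j$ for reconstructing $m$ are limited to the $b_j$ bits received from the sender in round~$1$ together with the bits sent by the surviving members of $S_j$ in rounds~$2$ and~$3$, and a standard entropy argument for an adversarially chosen $m$ gives
$$b_j + \sum_{i \in S_j \setminus C} b_{ij} \;\ge\; |m|.$$

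The final step is a one-line averaging observation: because $C$ collects the $t$ largest contributors in $S_j$, the complement $S_j \setminus C$ consists of the $|S_j| - t$ smallest values of the multiset $\{b_{ij}\}_{i \in S_j}$, whose average is at most the overall average $\bar b_j$. Therefore
$$\sum_{i \in S_j \setminus C} b_{ij} \;\le\; (|S_j| - t)\,\bar b_j,$$
and combining this with the previous display rearranges to $\bar b_j \ge (|m| - b_j)/(|S_j| - t)$.

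The main obstacle I anticipate is making the entropy step fully rigorous inside the crash model. One has to rule out that surviving senders reallocate bandwidth toward $v_j$ in round~$3$ to smuggle in additional information about $m$ beyond the $b_{ij}$ fixed by the failure-free execution. The $r = 3$ bound is exactly what makes this clean: any reallocation must be triggered by a round-$2$ observation and completed in round~$3$, with no intermediate hop available, so the surviving senders' round-$2$ messages to $v_j$ coincide with those of the failure-free execution and their round-$3$ messages cannot convey more information about $m$ than they already hold. Once this entropy step is granted, the remaining calculation is elementary and purely combinatorial.
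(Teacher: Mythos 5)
Your proposal is correct and follows essentially the same approach as the paper: crash the $t$ nodes in $S_j$ that contribute the most to $v_j$, observe that the remaining $|S_j|-t$ contributions average at most $\bar b_j$, and conclude from the $r=3$ delivery requirement that $b_j + (|S_j|-t)\bar b_j \ge |m|$. The paper phrases this as a proof by contradiction (assume $\bar b_j$ is too small and derive that $v_j$ receives fewer than $|m|$ bits) whereas you derive the bound directly, and you are somewhat more explicit than the paper about the entropy subtlety of ruling out compensating round-$3$ traffic from survivors, but the underlying argument is the same.
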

\begin{proof}
Note that $b_j \ge |m|$ implies that $\bar{b}_j \ge 0$, which holds trivially.
Thus, we consider the case $b_j < |m|$. Assume that $\bar{b}_j < (|m| - b_j) / (|S_j| - t)$ and that the $t$ nodes in $S_j$ that send the largest number of bits are faulty and never send anything. Let $S_j'$ denote the remaining nodes that send bits in rounds $2$ and $3$. In this case, node $v_j$ receives
\begin{align*}
b_j + \sum_{v_i \in S_j'} b_{ij} &\le b_j + (|S_j| - t)\bar{b}_j < b_j + (|S_j| - t)(|m| - b_j) / (|S_j| - t) = |m|
\end{align*}
bits by the end of round $3$, which implies that $v_j$ does not receive the whole message.
\end{proof}

The following theorem states the main result that the overhead factor must be at least $\frac{3}{2}$ for the considered class of algorithms as $n$ tends to infinity.

\begin{theorem}
If communication is synchronous and there are $t < n/3$ faulty nodes, it holds that $\mathcal{L}(n, t) \ge \frac{3}{2}$
for every $(o(|m|n), 3)$-reliable broadcast algorithm as $n \rightarrow \infty$.
\end{theorem}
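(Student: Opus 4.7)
The plan is to derive a lower bound on $\mathcal{C}(n,|m|)$ from the failure-free execution by tallying, for each receiver $v_j$, the bits that must arrive at $v_j$ in order to reconstruct $m$ by the end of round $3$. Since the sender contributes $\sum_j b_j \in o(|m|n)$ in round $1$, the bulk of the communication must come from rounds $2$ and $3$, whose total is $\sum_j |S_j|\bar{b}_j$.

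The core step is to apply the preceding lemma to every $v_j \in V \setminus R$. Multiplying the bound $\bar{b}_j \ge (|m|-b_j)/(|S_j|-t)$ by $|S_j|$ gives $|S_j|\bar{b}_j \ge \frac{|S_j|}{|S_j|-t}(|m|-b_j)$. Since the map $x \mapsto x/(x-t)$ is strictly decreasing on $(t,\infty)$ and $|S_j| \le n$, I would substitute the minimum value $\frac{n}{n-t}$ to obtain the uniform bound $|S_j|\bar{b}_j \ge \frac{n}{n-t}(|m|-b_j)$. Summing over $v_j \in V \setminus R$ and adding $\sum_j b_j$ for round $1$ produces an inequality of the form $\mathcal{C}(n,|m|) \ge \frac{n(n-|R|)}{n-t}|m| + \sum_j b_j - \frac{n}{n-t}\sum_{v_j\in V\setminus R} b_j$.

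The remainder is asymptotic bookkeeping. The last two terms are bounded in absolute value by $\frac{n}{n-t}\sum_j b_j \le 3\sum_j b_j \in o(|m|n)$, and from $|R|\cdot|m| \le \sum_j b_j$ we obtain $|R|\in o(n)$. Dividing by $|m|n$ thus yields $\mathcal{L}(n,t) \ge \frac{n-|R|}{n-t} - o(1)$, which tends to $\frac{n}{n-t}$ as $|m|\to\infty$ and then to $\frac{3}{2}$ as $n\to\infty$ with $t$ approaching $n/3$ (for instance, $n=3t+1$ gives $\frac{n}{n-t}=\frac{3t+1}{2t+1}\to\frac{3}{2}$).

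I do not expect a significant conceptual obstacle, since the quantitative content is captured by the preceding lemma. The main care needed is confirming that the $b_j$-dependent subleading terms genuinely fall inside the $o(|m|n)$ slack, which relies on $\frac{n}{n-t}$ being a bounded constant, and that $|R|/n \to 0$ at a rate that does not spoil the main asymptotic. It is intuitively satisfying that the tight case is $|S_j|=n$ with $t$ at its maximum: spreading the sender's fragments over all nodes at the highest fault tolerance is precisely the regime that forces the overhead factor to approach $\frac{3}{2}$.
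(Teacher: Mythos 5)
Your proof is correct and follows essentially the same route as the paper's: restrict attention to receivers in $V \setminus R$, multiply the lemma's bound on $\bar{b}_j$ by $|S_j|$, use the monotonicity of $x \mapsto x/(x-t)$ to replace $|S_j|/(|S_j|-t)$ by the uniform lower bound $n/(n-t)$, and absorb the $b_j$-dependent terms and $|R|$ into $o(|m|n)$ and $o(n)$ respectively before taking $n \to \infty$ with $t$ near $n/3$.
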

\begin{proof}
The communication complexity is at least
\begin{align*}
&\sum_{j=1}^n b_j + \sum_{j=1}^n \sum_{i=1}^n b_{ij} \ge \sum_{v_j \in V\setminus R} \sum_{i=1}^n b_{ij} = \sum_{v_j \in V\setminus R} |S_j| \bar{b}_j
\stackrel{\eqref{eq:average_entropy}}{\ge} \sum_{v_j \in V\setminus R} |S_j|\frac{|m| - b_j}{|S_j| - t}\\
&\ge \sum_{v_j \in V\setminus R} \frac{n}{n-t}(|m|-b_j)
= (n-|R|)\frac{n}{n-t}|m| - \sum_{v_j \in V\setminus R} \frac{n}{n-t} b_j\\
&\stackrel{\eqref{eq:first_round}}{\ge} (n-|R|)\frac{n}{n-t}|m| - o(|m|n)
\stackrel{n \rightarrow \infty}{=} n\frac{n}{n-n/3}|m| = \frac{3}{2}|m|n.  \qedhere
\end{align*}
\end{proof}

\section{Related Work}
\label{sec:related_work}

As mentioned in \S\ref{sec:introduction}, the first reliable broadcast algorithm, achieving a communication complexity of $O(|m|n^2)$, was presented by Bracha~\cite{bracha1987}.
An algorithm with a much improved bound of $O(|m|n + \kappa n^2\log(n))$, making use of erasure codes, was published 18 years later by Cachin and Tessaro~\cite{cachin2005}, making use of collision-resistant hash functions of size $\kappa$.
By contrast, Bracha's algorithm is  \emph{error-free}, i.e., it does not rely on any cryptographic assumptions.

\enlargethispage{\baselineskip}
Subsequently, error-free reliable broadcast algorithms with lower communication complexity were proposed, guaranteeing bounds of
$O(n|m| + n^4\log(n))$~\cite{patra2011} and $O(n|m| + n^3\log(n))$~\cite{nayak2020}. While the latter algorithm achieves a lower asymptotic communication complexity, it is not \emph{balanced} in that a single node, the sender, must transmit more bits than the other nodes. A reliable broadcast algorithm is said to be \emph{balanced} if all nodes send the same number of bits up to a constant factor~\cite{alhaddad2022}.
Note that the algorithms presented in \S\ref{sec:algorithm} and \S\ref{sec:variant} are both balanced.
The best known bound for error-free algorithms is
$O(n|m| + n^2\log(n))$~\cite{alhaddad2022}.
Abraham and Asharov proposed a probabilistic algorithm with a similar bound of $O(n|m| + n^2\log(n^3/\varepsilon))$, guaranteeing validity but the agreement and totality properties only hold with probability $1-\varepsilon$~\cite{abraham2022}.

It has also been shown how to achieve a communication complexity of $O(n|m| + \kappa n^2)$ using only collision-resistant hash functions~\cite{das2021}, improving upon the algorithm by Cachin and Tessaro. The downside of this algorithm is that it is not balanced and it has a higher computational cost. Algorithm $\mathcal{A}_{bit}$ introduced in \S\ref{sec:algorithm} falls into the category of reliable broadcast algorithms that rely on collision-resistant hash functions as well.
There are reliable broadcast algorithms that require other cryptographic primitives. For example, an algorithm has been proposed that achieves a communication complexity of $O(n|m| + \kappa n^2)$ but requires a trusted setup for a public key infrastructure and cryptographic accumulators~\cite{nayak2020}. This bound has been improved to $O(n|m| + \kappa n + n^2)$ using threshold signatures~\cite{alhaddad2022}. The algorithm $\mathcal{A}_{sig}$ defined in \S\ref{sec:variant} uses threshold signatures to achieve an optimal (good-case) time complexity.

The best known upper bounds, with and without cryptographic assumptions, are almost asymptotically tight considering the lower bound of $\Omega(n|m| + n^2)$~\cite{dolev1985}.
The algorithms in this paper improve the \emph{effective} communication complexity for large messages by reducing the constant of the first term.
Moreover, most error-free algorithms, excluding Bracha's algorithm, and also the algorithms that use more cryptographic tooling than hash functions, have a higher time complexity than $\mathcal{A}_{bit}$ and $\mathcal{A}_{sig}$, the latter having an optimal
time complexity~\cite{abraham2021b}.

Byzantine reliable broadcast has further been studied in a variety of models that differ from the model used in most related work. There is a probabilistic algorithm based on stochastic samples that allows each property to be violated with a fixed and small probability~\cite{guerraoui2019}.
An algorithm for a model with dynamic membership has also been proposed~\cite{guerraoui2020}. There is further work on \emph{consistent broadcast}, a variant of reliable broadcast without the totality property, and its applications~\cite{cachin2001, reiter1994}.
Lastly, in an effort to minimize the actual latency and bandwidth consumption, simulations have been used to show the efficacy when combining Bracha's algorithm with Dolev's reliable communication protocol~\cite{dolev1981} to disseminate messages reliably in partially connected networks~\cite{bonomi2021}.

\section{Conclusion}
\label{sec:conclusion}

The mechanism introduced in this paper lowers the communication complexity of Byzantine reliable broadcast for large messages. The presented algorithms, which utilize this mechanism, further guarantee near-optimal and optimal bounds on the time complexity while also keeping the space complexity close to optimal.
As mentioned in \S\ref{sec:introduction}, numerous applications make use of reliable broadcast as a subroutine. Therefore, it may be worthwhile to revisit selected applications to determine if the presented techniques can be used to obtain stronger results.

A fundamental open question is whether a lower communication complexity for large messages, i.e., a lower overhead factor, can yet be attained, ideally without increasing the time complexity substantially. We conjecture that the lower bound of $3/2$, which has been shown for a specific class of algorithms, holds more generally, i.e., with one or both of the imposed restrictions lifted.
Deriving a tight bound for the overhead factor is another important avenue for future research.

\bibliographystyle{ACM-Reference-Format}
\bibliography{reliable_broadcast}

\end{document}